\newtheorem{theorem}{Theorem}
\newtheorem{lemma}[theorem]{Lemma}
\newtheorem{claim}[theorem]{Claim}
\newtheorem{corollary}[theorem]{Corollary}
\newtheorem{definition}[theorem]{Definition}
\DeclareMathOperator*{\argmax}{arg\,max}
\DeclareMathOperator*{\im}{improve}
\title{Computing Approximately Proportional Allocations of  Indivisible Goods:\\Beyond Additive and Monotone Valuations}
\author{Martin Jupakkal Andersen, Ioannis Caragiannis, Anders Bo Ipsen, and Alexander S{\o}ltoft}
\begin{document}

\maketitle

\begin{abstract}
Although approximate notions of envy-freeness—such as envy-freeness up to one good (EF1)—have been extensively studied for indivisible goods, the seemingly simpler fairness concept of proportionality up to one good (PROP1) has received far less attention. For additive valuations, every EF1 allocation is PROP1, and well-known algorithms such as Round-Robin and Envy-Cycle Elimination compute such allocations in polynomial time. PROP1 is also compatible with Pareto efficiency, as maximum Nash welfare allocations are EF1 and hence PROP1.

We ask whether these favorable properties extend to non-additive valuations. We study a broad class of allocation instances with {\em satiating goods}, where agents have non-negative valuation functions that need not be monotone, allowing for negative marginal values. We present the following results:

\begin{itemize}
\item EF1 implies PROP1 for submodular valuations over satiating goods, ensuring existence and efficient computation via Envy-Cycle Elimination for monotone submodular  valuations;

\item Round-robin computes a partial PROP1 allocation after the second-to-last round for satiating submodular goods and a complete PROP1 for monotone submodular valuations;

\item PROP1 allocations for satiating subadditive goods can be computed in polynomial-time;

\item Maximum Nash welfare allocations are PROP1 for monotone submodular goods, revealing yet another facet of their ``unreasonable fairness.''
\end{itemize}
\end{abstract}

% Uncomment the following to link to your code, datasets, an extended version or similar.
% You must keep this block between (not within) the abstract and the main body of the paper.
% \begin{links}
%     \link{Code}{https://aaai.org/example/code}
%     \link{Datasets}{https://aaai.org/example/datasets}
%     \link{Extended version}{https://aaai.org/example/extended-version}
% \end{links}

\section{Introduction}

Proportionality~\cite{S48} is the most important shared-based fairness concept in the fair division literature.  An allocation is proportional if each of the $n$ agents gets at least a $1/n$-th of all items in terms of value. Unfortunately, for allocation problems with indivisible items, proportional allocations may not exist. The relaxed concept of proportionality up to some item (PROP1), introduced by~\citet{CFS17}, comes to address this limitation. PROP1 is well understood in allocation instances with goods and agents with additive valuations. However, our understanding of it in settings with more general agent valuations is very limited. We aim to fill this gap in this work.

PROP1 relaxes proportionality in a similar in spirit way in which EF1 (envy-freeness up to some item; introduced by~\citet{B11}) relaxes envy-freeness. As proved by \citet{CFS17}, for additive valuations, EF1 implies PROP1; thus, several results that guarantee the existence and efficient computation of EF1 allocations trivially imply the existence and efficient computation of PROP1 allocations as well.  For example, the well-known Envy-Cycle Elimination algorithm of~\citet{LMMS04} and the folklore Round-Robin algorithm produce EF1 allocations while the maximum Nash welfare allocations are EF1 and Pareto-optimal~\cite{CKM+19}. As a corollary of the EF1-to-PROP1 implication for additive valuations, we can replace EF1 with PROP1 in these three results.

Furthermore, the celebrated Envy-Cycle Elimination algorithm works on allocation instances with monotone goods and computes an EF1 allocation. So, one would even hope to use it to get PROP1 allocations on instances with monotone valuations. Unfortunately, such a positive result is not possible. For example, consider the instance with two agents and three items such that both agents have valuation $2$ for the whole set of items and value $0$ for every strict subset of it. The valuation function is clearly monotone. However, in any allocation, some agent will get only one item and her value will be $0$ even after adding one additional item to her bundle, i.e., below the proportionality threshold. 

The above example implies that for sufficiently general (e.g., {\em super-additive}) valuations, EF1 allocations are not PROP1. What is the broadest class of allocation instances in which EF1 implies PROP1? Can we then use well-known algorithms for EF1 (such as the Envy-Cycle Elimination and Round-Robin) to produce PROP1 allocations? What is the broadest class of allocation instances in which a PROP1 allocation can be computed in polynomial time? What is the broadest class of allocation instances in which PROP1 is compatible with Pareto-optimality? These are the questions we study in this paper, making substantial progress towards understanding PROP1.

\paragraph{Our contribution.} We address the questions above and present a list of new results on PROP1 allocations for instances with more general than additive valuation functions. At the conceptual level, we consider valuation functions over {\em satiating goods}. Such functions return non-negative values for bundles but the marginal value of adding a good to a bundle can be negative. An agent typically has positive value for getting a single item but adding this item into a bundle that already contains some other items can decrease her value. The definitions of submodular and subadditive valuation functions are naturally extended to satiating goods. Submodularity means non-increasing marginal value while subadditivity means that the valuation for a bundle of items is not higher than the sum of valuations for any two disjoint subsets of it. 

Our first technical result is that EF1 allocations are also PROP1 in allocation instances with satiating submodular goods. The implication is also true for allocation instances with two agents and satiating subadditive valuations but is not true for allocation instances with more agents and slightly more general than monotone submodular valuations. As a corollary, we get that the Envy-Cycle Elimination algorithm computes PROP1 allocations for instances with monotone submodular goods. These results appear in Section~\ref{sec:ef1-implies-prop1}.

Next, in Section~\ref{sec:round-robin}, we study the Round-Robin algorithm. Round-Robin is well-known to produce EF1 allocation for additive goods but fails to do so for more general valuation functions; see~\citet{ABLLR23}. Hence, the implication betwen EF1 and PROP1 does not have any implication for Round-Robin in non-additive instances. Somewhat surprisingly, we prove that Round-Robin does produce a PROP1 allocation on instances with monotone submodular goods. Actually, it almost does so for satiating submodular goods as well. The partial allocation it computes after its second-to-last round is PROP1. These results are best possible. The final allocation of Round-Robin may not be PROP1 when applied on allocation instances with satiating submodular goods or with monotone valuations that are slightly more general than submodular.

Our strongest algorithmic result is a new algorithm that computes a PROP1 allocation for satiating subadditive goods. The algorithm starts with an arbitrary allocation and gradually satisfies the PROP1 conditions for more and more agents by repeatedly moving items from bundle to bundle. The proof of correctness uses a nice potential function argument. The algorithm can be combined with Round-Robin to give a considerably faster algorithm for satiating submodular goods. These results are presented in Section~\ref{sec:subadditive}.

Finally, we address the question of whether PROP1 is compatible with Pareto-optimality. Our main positive result is another facet of the unreasonable fairness of maximum Nash welfare allocations. Such allocations are PROP1 and Pareto-optimal on instances with monotone submodular goods. This is the broadest class of monotone instances in which PROP1 and Pareto-optimality are always compatible. We present instances with slightly more general monotone valuation functions than submodular where no PROP1 allocation is Pareto-optimal. These results appear in Section~\ref{sec:prop1-and-pareto}.

We continue by discussing the related literature in the rest of this section. We present useful preliminary definitions in Section~\ref{sec:prelim} that are necessary for the presentation of our technical results in Sections~\ref{sec:ef1-implies-prop1}-\ref{sec:prop1-and-pareto}, and conclude with open problems in Section~\ref{sec:open}.

\paragraph{Related work.}
PROP1 was introduced by~\citet{CFS17}. They considered additive valuation functions and observed, among other results, that EF1 allocations are PROP1. Three papers  by~\citet{AMS20}, \citet{BK19}, and \citet{MG20} consider the compatibility of PROP1 and Pareto-optimality and present related algorithmic results. For instances with additive goods, PROP1 seems to be the simplest fairness property; the recent work by \citet{GS25} summarize how PROP1 is implied by other fairness properties (see also the related software at \url{https://sharmaeklavya2.github.io/cpigjs/fairDiv/}). 

Non-additive valuation functions have been considered extensively recently. Submodular valuation functions are central in the study of allocation problems since the seminal work of~\citet{LLN06}, who also introduced the class XOS (or fractionally subadditive valuations). In fair division, \citet{ABLLR23} and~\citet{BK20} have studied the Round-Robin protocol on allocation instances with submodular valuations. Subadditive valuations have been considered in relation to approximate versions of the fairness notions EFX~\cite{CGM21,FMP24,BS24} and MMS~\cite{SS25}; see also the survey by \citet{AABF+23}. Subadditive valuations have been important in the study of utilitarian and Nash welfare maximization by, e.g., \citet{F09} and \citet{DLRV24}, respectively.

The above results refer to goods, i.e., monotone valuation functions with non-negative marginals. In fair division, non-monotonicity has been studied in the model of combined goods and chores of~\citet{ACIW22}. These valuation functions are additive but an item can have positive value for an agent and negative to another. Our definition of satiating valuations has non-negative valuations but allows for negative marginals. Similar valuations (with possibly negative values for non-empty bundles that do not contain all items) have been considered very recently by~\citet{BV25}, in relation to the equitability fairness concept. The literature on optimization of set functions has focused extensively on satiating submodular valuations, starting with the work of~\citet{FMV11}; see also the related survey by~\citet{KG14}.

\section{Preliminaries}\label{sec:prelim}
We consider {\em allocation instances} in which a set $M$ of $m$ {\em goods} (or items) has to be allocated to $n$ {\em agents}. We identify the agents using the positive integers in $[n]=\{1,2, ..., n\}$. Each agent has a {\em valuation function} which returns the value the agent has for each set (or {\em bundle}) of goods. We denote by $v_i$ the valuation function of agent $i\in [n]$; it is normalized with $v_i(\emptyset)=0$ and takes non-negative values, i.e., $v_i(S)\geq 0$, for every bundle $S\subseteq M$ of goods. When $S$ is a singleton with, say, $S=\{o\}$, we simplify notation and use $v_i(o)$ instead of $v_i(\{o\})$. The valuation function $v_i$ is called {\em additive} if $v_i(S)=\sum_{o\in S}{v_i(o)}$ for every set of goods $S\subseteq M$.

For two disjoint bundles of goods $S$ and $T$, we use the notation $v_i(T|S):=v_i(T\cup S)-v_i(S)$ to denote the {\em marginal value} the bundle $T$ has for agent $i\in [n]$ when it is added to the bundle $S$. When $T$ is a singleton with, say, $T=\{o\}$, we simplify notation to $v_i(o|S)$ (instead of $v_i(\{o\}|S)$). 

The valuation function $v_i$ is called {\em submodular} if $v_i(S\cup T)+v_i(S\cap T)\leq v_i(S)+v_i(T)$ for every sets $S,T\subseteq M$. A submodular valuation function satisfies $v_i(o|S)\geq v_i(o|T)$ for every two sets of goods $S$ and $T$ with $S\subseteq T\subseteq M$ and good $o\not\in T$.

Our definition for valuation functions allows for marginal values to have any sign. We use the term {\em monotone submodular goods} for submodular valuation functions with non-negative marginal values and {\em satiating submodular goods} for general submodular valuation functions.

The next claim applies to the most general definition of satiating submodular goods and gives an equivalent definition of submodularity, which we use in our proofs.

\begin{claim}\label{claim:submodularity-alt}
    The submodular valuation function $v_i$ satisfies $v_i(X|S)\geq v_i(X|T)$ for every set of goods $X\subseteq M$ and every two sets of goods $S$ and $T$ with $S\subseteq T \subseteq M$ and $X\cap T=\emptyset$.
\end{claim}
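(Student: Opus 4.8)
The plan is to reduce this set-marginal inequality to the single-good version of submodularity already recorded in the excerpt, namely $v_i(o|S)\ge v_i(o|T)$ for $S\subseteq T$ and $o\notin T$. The idea is to add the goods of $X$ one at a time and apply the single-good inequality at each step, so that the full statement follows by telescoping (equivalently, by induction on $|X|$).

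First I would note that the two marginals in the statement are well-defined: since $S\subseteq T$ and $X\cap T=\emptyset$, we also have $X\cap S=\emptyset$, so both $v_i(X|S)=v_i(X\cup S)-v_i(S)$ and $v_i(X|T)=v_i(X\cup T)-v_i(T)$ conform to the disjoint-bundle definition of marginal value.

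Next, enumerate $X=\{o_1,\dots,o_k\}$ with distinct goods and set $X_j=\{o_1,\dots,o_j\}$ (with $X_0=\emptyset$). Telescoping the definition of the marginal gives
\[
v_i(X|S)=\sum_{j=1}^{k} v_i\bigl(o_j \mid S\cup X_{j-1}\bigr),
\qquad
v_i(X|T)=\sum_{j=1}^{k} v_i\bigl(o_j \mid T\cup X_{j-1}\bigr).
\]
For each fixed $j$, put $S_j=S\cup X_{j-1}$ and $T_j=T\cup X_{j-1}$. From $S\subseteq T$ we get $S_j\subseteq T_j$, and $o_j\notin T_j$ because $o_j\notin X_{j-1}$ (the goods are distinct) and $o_j\notin T$ (as $o_j\in X$ and $X\cap T=\emptyset$). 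Hence the single-good submodularity inequality applies term by term and yields $v_i(o_j\mid S_j)\ge v_i(o_j\mid T_j)$. Summing these $k$ inequalities gives $v_i(X|S)\ge v_i(X|T)$, as claimed.

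I do not expect a genuine obstacle here; the only points needing care are the bookkeeping that guarantees the hypotheses of the single-good inequality at every step—specifically that the base sets stay nested ($S_j\subseteq T_j$) and that the good being added is fresh ($o_j\notin T_j$)—and the observation that disjointness of $X$ from $T$ forces disjointness from $S$, so that all marginals are legitimate. An equivalent and equally clean route is induction on $|X|$: peel off one good $o\in X$, write $v_i(X|S)=v_i\bigl(o\mid (X\setminus\{o\})\cup S\bigr)+v_i\bigl(X\setminus\{o\}\mid S\bigr)$, bound the first summand by the single-good inequality and the second by the induction hypothesis.
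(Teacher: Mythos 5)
Your proof is correct: the telescoping decomposition of $v_i(X|S)$ and $v_i(X|T)$ into single-good marginals is valid, the bookkeeping you flag ($S_j\subseteq T_j$, $o_j\notin T_j$, and disjointness of $X$ from $S$) is exactly what is needed, and summing the term-by-term inequalities gives the claim. Note, however, that the paper states Claim~\ref{claim:submodularity-alt} \emph{without any proof}, treating it as a standard equivalent form of submodularity, so there is no in-paper argument to compare against. There is also a shorter route than yours, which bypasses the single-good property entirely and works in one step from the set-function definition the paper gives: apply $v_i(A\cup B)+v_i(A\cap B)\leq v_i(A)+v_i(B)$ with $A=X\cup S$ and $B=T$. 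Since $S\subseteq T$ and $X\cap T=\emptyset$, we have $A\cup B=X\cup T$ and $A\cap B=S$, so the definition reads $v_i(X\cup T)+v_i(S)\leq v_i(X\cup S)+v_i(T)$, which rearranges directly to $v_i(X|S)\geq v_i(X|T)$. Your approach buys a reduction to the marginal-value characterization recorded in the preliminaries (natural if one takes that as the primitive notion of submodularity, and it generalizes to settings where only marginal comparisons are assumed); the one-step approach buys brevity and stays entirely within the lattice definition that the paper actually adopts.
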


Very often in our proofs, we use the following property.
\begin{claim}\label{claim:submodular-vs-additive}
The submodular valuation function $v_i$ satisfies $v_i(X\cup T|S)\leq v_i(X|S)+v_i(T|S)$ for every mutually disjoint sets of goods $X$, $S$, and $T$. Hence, it also satisfies
$v_i(T|S) \leq \sum_{o\in T}{v_i(o|S)}$ for every two disjoint sets of goods $S$ and $T$.
\end{claim}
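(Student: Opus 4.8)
The plan is to derive the first inequality directly from the definition of submodularity and then obtain the second as a straightforward induction on $|T|$.

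First I would rewrite the claimed inequality $v_i(X\cup T|S)\leq v_i(X|S)+v_i(T|S)$ in terms of raw valuations. Expanding the marginal-value notation and canceling the common $-v_i(S)$ terms, this is equivalent to $v_i(X\cup T\cup S)+v_i(S)\leq v_i(X\cup S)+v_i(T\cup S)$, which is now in exactly the shape of the submodularity definition. The key step is to apply submodularity to the pair of sets $A:=X\cup S$ and $B:=T\cup S$. Because $X$, $S$, and $T$ are mutually disjoint, we have $A\cup B=X\cup T\cup S$ and $A\cap B=S$; hence the defining inequality $v_i(A\cup B)+v_i(A\cap B)\leq v_i(A)+v_i(B)$ becomes precisely the reformulated claim. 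This establishes the first statement.

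For the second statement I would induct on $|T|$. The base case $T=\emptyset$ is immediate since both sides equal $0$. For the inductive step, write $T=T'\cup\{o\}$ with $o\notin T'$ and apply the first part with the singleton $\{o\}$ playing the role of the second bundle, giving $v_i(T'\cup\{o\}|S)\leq v_i(T'|S)+v_i(o|S)$. Combining this with the induction hypothesis $v_i(T'|S)\leq\sum_{o'\in T'}v_i(o'|S)$ yields $v_i(T|S)\leq\sum_{o'\in T}v_i(o'|S)$, as desired.

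I do not anticipate a genuine obstacle here: the entire content is the correct identification of the intersection and union of $A=X\cup S$ and $B=T\cup S$, which relies only on the mutual disjointness of $X$, $S$, and $T$. The one point requiring care is making sure the disjointness hypothesis is actually used to justify $A\cap B=S$ (rather than some larger set) and that the cancellation of the $v_i(S)$ terms is carried out correctly when unwinding the marginal-value notation.
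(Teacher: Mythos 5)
Your proof is correct, but it takes a somewhat different route from the paper's. The paper proves the first inequality as a one-liner via the chain rule for marginals, writing $v_i(X\cup T|S) = v_i(X|S)+v_i(T|X\cup S)$ and then bounding $v_i(T|X\cup S)\leq v_i(T|S)$ using Claim~\ref{claim:submodularity-alt} (the diminishing-marginals formulation for set-valued additions). You instead unwind the marginal-value notation and apply the raw lattice definition $v_i(A\cup B)+v_i(A\cap B)\leq v_i(A)+v_i(B)$ with $A=X\cup S$ and $B=T\cup S$, using the mutual disjointness of $X$, $S$, $T$ to identify $A\cap B=S$ and $A\cup B=X\cup T\cup S$; this correctly reduces the claim to the definition. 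The two arguments rest on the same underlying inequality---Claim~\ref{claim:submodularity-alt} is itself established by essentially the substitution you perform---but yours is more self-contained, since it bypasses Claim~\ref{claim:submodularity-alt}, which the paper states without proof. You also spell out the induction behind the second statement, which the paper covers only with a ``hence''; your singleton-at-a-time application of the first part is indeed the intended argument there. In short: the paper's proof is shorter given its earlier claim, while yours stands directly on the definition of submodularity, which is a reasonable trade-off.
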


\begin{proof}
    To prove the first part of Claim~\ref{claim:submodular-vs-additive}, we use the definition of marginal value and submodularity (Claim~\ref{claim:submodularity-alt}) to get $v_i(X\cup T|S) =v_i(X|S)+v_i(T|X\cup S)\leq v_i(X|S)+v_i(T|S)$.
\end{proof}

The valuation function $v_i$ is called {\em subadditive} if $v_i(S\cup T)\leq v_i(S)+v_i(T)$ for every two sets of goods $S,T\subseteq M$. Again, we use the terms {\em monotone} and {\em satiating subadditive goods} to distinguish between subadditive valuation functions with non-negative marginal and general subadditive valuation functions, respectively.

Every submodular valuation function is also subadditive. In some proofs, we use (monotone) {\em XOS} valuation functions defined as follows. A valuation function $v_i(\cdot)$ over bundles of items is XOS if there are additive valuation functions $f_1(\cdot)$, $f_2(\cdot)$, ..., $f_k(\cdot)$ such that $v_i(S)=\max_{j\in [k]}{f_j(S)}$. A monotone submodular valuation function is also XOS and a XOS valuation function is also monotone subadditive. Figure~\ref{fig:valuations} summarizes the different valuation functions used in the paper.

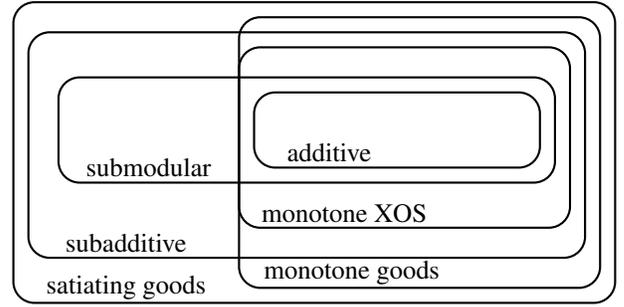
\begin{figure}[ht]
\centering
\begin{tikzpicture}
    \draw[rounded corners=8pt, thick] (0,0) rectangle (8,4);
    \node at (1.5,0.2) {satiating goods};
    \draw[rounded corners=8pt, thick] (3,0.2) rectangle (7.8,3.8);
    \node at (4.5,0.4) {monotone goods};
    \draw[rounded corners=8pt, thick] (0.2,0.6) rectangle (7.6,3.6);
    \node at (1.5,0.8) {subadditive};
    \draw[rounded corners=8pt, thick] (3,1) rectangle (7.4,3.4);
    \node at (4.4,1.2) {monotone XOS};
    \draw[rounded corners=8pt, thick] (0.6,1.6) rectangle (7.2,3);
    \node at (1.8,1.8) {submodular};
    \draw[rounded corners=8pt, thick] (3.2,1.8) rectangle (7,2.8);
    \node at (4.2,2) {additive};
\end{tikzpicture}
\caption{Relation between the valuation functions over goods used in the paper.}
\label{fig:valuations}
\end{figure}

An allocation $A=(A_1, A_2, ..., A_n)$ is a disjoint partition of the set of goods $M$ to the $n$ agents. For $i\in [n]$, the set $A_i$ indicates the bundle of goods allocated to agent $i$. The allocation $A$ is {\em envy-free} if 
$v_i(A_i)\geq v_i(A_j)$ for every pair of agents $i,j\in [n]$ (i.e., agent $i$ weakly prefers her own bundle to the bundle of agent $j$) and is {\em proportional} if $v_i(A_i)\geq \frac{1}{n}\cdot v_i(M)$ (i.e., agent $i$ has a value for her bundle that is at least as high as her {\em proportionality threshold}). An allocation is {\em envy-free up to some good} (EF1) if for every two agents $i,j\in [n]$, it is either $v_i(A_i)\geq v_i(A_j)$ or there exists a good $g$ in the bundle of agent $j$ such that $v_i(A_i)\geq v_i(A_j\setminus \{g\})$. An allocation is {\em proportional up to some good} (PROP1) if for every agent $i\in [n]$, it is either $v_i(A_i)\geq \frac{1}{n}\cdot v_i(M)$, or there exists a good $g$ not allocated to agent $i$ such that $v_i(A_i\cup\{g\})\geq \frac{1}{n}\cdot v_i(M)$. These definitions apply to the most general case of satiating goods. For monotone goods, the first condition in the definition of EF1 and PROP1 is redundant.

\section{Does EF1 imply PROP1?}\label{sec:ef1-implies-prop1}
Our first technical result strengthens the well-known fact (see~\citealt[Lemma~3.6]{CFS17}) that EF1 allocations are also PROP1 to allocation instances with non-additive and non-monotone valuations. 
\begin{theorem}\label{thm:ef1-implies-prop1}
    In any allocation instance with satiating submodular goods, an EF1 allocation is also PROP1.
\end{theorem}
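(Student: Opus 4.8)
The plan is to fix an agent $i$ and show that either $v_i(A_i)\geq \frac{1}{n}v_i(M)$ outright, or that adding a single well-chosen good to $A_i$ reaches the proportionality threshold. The textbook additive argument adds the globally most valuable good $g^*$ and exploits $v_i(A_i\cup\{g^*\})=v_i(A_i)+v_i(g^*)$. This is exactly the step that breaks for satiating submodular goods: submodularity only gives $v_i(A_i\cup\{g^*\})\leq v_i(A_i)+v_i(g^*)$, the wrong direction, and the marginal $v_i(g^*\mid A_i)$ of a standalone-valuable good can be tiny or even negative. So I would recast the whole estimate in terms of marginals onto $A_i$, since PROP1 for agent $i$ is precisely a statement about $v_i(A_i\cup\{g\})=v_i(A_i)+v_i(g\mid A_i)$, and I would add the good maximizing this marginal rather than the standalone-best good.

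First I would decompose the whole pie by marginals over $A_i$. Writing $M\setminus A_i=\bigcup_{j\neq i}A_j$ and applying the iterated form of Claim~\ref{claim:submodular-vs-additive}, I get $v_i(M)=v_i(A_i)+v_i(M\setminus A_i\mid A_i)\leq v_i(A_i)+\sum_{j\neq i}v_i(A_j\mid A_i)$. The task then reduces to bounding each $v_i(A_j\mid A_i)$ using the EF1 guarantee for the pair $(i,j)$.

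The crux is the per-bundle bound: for each $j\neq i$ I would prove $v_i(A_j\mid A_i)\leq v_i(A_i)+v_i(g_j\mid A_i)$ for a suitable good $g_j$, with the convention that in the envy-free case we take a ``null good'' of marginal $0$. In the EF1 case where some $g_j\in A_j$ satisfies $v_i(A_i)\geq v_i(A_j\setminus\{g_j\})$, I would peel $g_j$ off, namely write $v_i(A_i\cup A_j)=v_i\big(A_i\cup(A_j\setminus\{g_j\})\big)+v_i\big(g_j\mid A_i\cup(A_j\setminus\{g_j\})\big)$, and then apply submodularity (Claim~\ref{claim:submodularity-alt}) twice to push both marginals down onto $A_i$: the marginal of $g_j$ becomes at most $v_i(g_j\mid A_i)$, while $v_i(A_j\setminus\{g_j\}\mid A_i)\leq v_i(A_j\setminus\{g_j\})\leq v_i(A_i)$ by the EF1 inequality. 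This yields $v_i(A_i\cup A_j)\leq 2v_i(A_i)+v_i(g_j\mid A_i)$, i.e.\ the claimed bound. In the remaining case $v_i(A_i)\geq v_i(A_j)$, submodularity alone gives $v_i(A_j\mid A_i)\leq v_i(A_j)\leq v_i(A_i)$, which is the bound with a null good. I expect this peeling-and-pushing step to be the main obstacle, precisely because it is where satiation (possibly negative marginals) must be absorbed without ever requiring a \emph{lower} bound on a single good's marginal.

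Finally I would sum the per-bundle bounds to obtain $v_i(M)\leq n\,v_i(A_i)+\sum_{j\neq i}v_i(g_j\mid A_i)$. Setting $\mu=\max\{0,\max_{g\notin A_i}v_i(g\mid A_i)\}$, every $v_i(g_j\mid A_i)\leq\mu$, so $\frac{1}{n}v_i(M)\leq v_i(A_i)+\frac{n-1}{n}\mu\leq v_i(A_i)+\mu$. If $\mu=0$ this reads $v_i(A_i)\geq\frac{1}{n}v_i(M)$, so agent $i$ is already proportional. If $\mu>0$, it is attained by some $g^*\notin A_i$ with $v_i(g^*\mid A_i)=\mu$, and then $v_i(A_i\cup\{g^*\})=v_i(A_i)+\mu\geq\frac{1}{n}v_i(M)$, certifying PROP1 for agent $i$. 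Since $i$ was arbitrary, the allocation is PROP1.
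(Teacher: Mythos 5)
Your proof is correct: the iterated form of Claim~\ref{claim:submodular-vs-additive} does give $v_i(M)\leq v_i(A_i)+\sum_{j\neq i}v_i(A_j\mid A_i)$; the peeling bound $v_i(A_j\mid A_i)\leq v_i(A_i)+v_i(g_j\mid A_i)$ is valid in both cases of the EF1 definition (in the EF1 case, both pushes onto $A_i$ are legitimate applications of Claim~\ref{claim:submodularity-alt}, and the EF1 inequality is applied to the standalone value $v_i(A_j\setminus\{g_j\})$, which is exactly what the definition provides); and the quantity $\mu=\max\{0,\max_{g\notin A_i}v_i(g\mid A_i)\}$ correctly absorbs the possibility of negative marginals, which is the one place where satiation could have broken a naive argument.

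Your route is a genuinely different organization of the same underlying ingredients. You decompose bundle by bundle, expressing everything as marginals onto $A_i$ and proving one local lemma per envied bundle. The paper instead makes a global split: it collects one removable good $o_j$ from each envied bundle into a set $O$, writes $v_i(M)=v_i(M\setminus O)+v_i(O\mid M\setminus O)$, bounds the first term by $n\cdot v_i(A_i)$ using \emph{only subadditivity} together with EF1 (since $M\setminus O$ is the disjoint union of the non-envied bundles and the envied bundles with their special goods removed), and invokes submodularity only to push $v_i(O\mid M\setminus O)$ down to $\sum_{o\in O}v_i(o\mid A_i)\leq|O|\cdot v_i(o^*\mid A_i)$. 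The paper's split thus isolates precisely where submodularity is indispensable --- the marginals of the at most $n-1$ removed goods --- which clarifies why the two-agent case (Theorem~\ref{thm:ef1-implies-prop1-two-agents}) survives under mere subadditivity, where $O$ is a single good and no per-good splitting of a marginal is needed. Your version buys uniformity instead: there is no case distinction between the situation where agent $i$ envies nobody and the general one, envied and non-envied bundles are handled by a single per-bundle inequality (via the null-good convention), and the sign analysis is folded into the definition of $\mu$ rather than appearing as a separate terminal case.
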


\begin{proof}
Consider an allocation instance with $n$ agents having submodular valuations for subsets of a set of satiating goods $M$, and let $A=(A_1, A_2, ..., A_n)$ be an EF1 allocation.

Let $i\in [n]$ be an agent. Define $\widetilde{N}:=\{j\in [n]:v_i(A_i)\geq v_i(A_j)\}$ and observe that $\widetilde{N}$ is non-empty since $i\in \widetilde{N}$. If $\widetilde{N}=[n]$, then applying the condition $v_i(A_i)\geq v_i(A_j)$ for $j\in [n]$, we get 
\begin{align}\label{eq:ef1-implies-prop1-easy}
    n\cdot v_i(A_i) &\geq \sum_{j\in [n]}{v_i(A_j)} \geq v_i(M),
\end{align}
completing the proof. The second inequality in Equation (\ref{eq:ef1-implies-prop1-easy}) follows by the subadditivity of the valuation function $v_i$.

Now, assume that $\widetilde{N}\not=[n]$. For each agent $j\in [n]\setminus\widetilde{N}$, let $o_j$ be an (arbitrary) item in the bundle $A_j$ such that $v_i(A_i)\geq v_i(A_i\setminus \{o_j\})$; since allocation $A$ is EF1 and $v_i(A_i)<v_i(A_j)$, such a good clearly exists. Let $O:=\{o_j:j\in [n]\setminus \widetilde{N}\}$ and set $o^*:=\argmax_{o\in O}{v_i(o|A_i)}$. We will prove the theorem by showing that $v_i(M)\leq n\cdot \max\{v_i(A_i),v_i(A_i\cup \{o^*\})\}$. 
    
Notice that the set $M\setminus O$ is the disjoint union of the bundles $A_j$ for $j\in \widetilde{N}$ and $A_j\setminus\{o_j\}$ for $j\in [n]\setminus \widetilde{N}$. By the subadditivity of the valuation function $v_i$, we have
\begin{align}\nonumber
    v_i(M\setminus O) &\leq \sum_{j\in \widetilde{N}}{v_i(A_j)} +\sum_{j\in [n]\setminus \widetilde{N}}{v_i(A_j\setminus \{o_j\})}\\\label{eq:m-setminus-o}
    &\leq  n\cdot v_i(A_i).
    \end{align}
    The last inequality follows since the allocation $A$ is EF1.

By the definition of set $O$, we have $A_i\cap O=\emptyset$. Thus, $A_i\subseteq M\setminus O$ and
    \begin{align}\nonumber
        v_i(O|M\setminus O) &\leq v_i(O|A_i)
        \leq \sum_{o\in O}{v_i(o|A_i)}\\\nonumber
        &\leq \sum_{o\in O}{v_i(o^*|A_i)}\\\label{eq:o-given-m-setminus-o}
        &=(n-|\widetilde{N}|)\cdot v_i(o^*|A_i).
    \end{align}
    The first inequality follows by the submodularity of valuation function $v_i$ (Claim~\ref{claim:submodularity-alt}), the second one by Claim~\ref{claim:submodular-vs-additive}, and the third one by the definition of $o^*$.

By Equations~(\ref{eq:m-setminus-o}) and (\ref{eq:o-given-m-setminus-o}), we have
\begin{align}\nonumber
    v_i(M) &= v_i(M\setminus O) + v_i(O|M\setminus O)\\\label{eq:ef1-implies-prop1-sum-of-two-terms}
    &\leq n\cdot v_i(A_i) + (n-|\widetilde{N}|) \cdot v_i(o^*|A_i).
\end{align}
Now, if $v_i(o^*|A_i)<0$, Equation (\ref{eq:ef1-implies-prop1-sum-of-two-terms}) immediately implies $v_i(M)\leq n\cdot v_i(A_i)$. Otherwise, we get
\begin{align*}
    v_i(M) &\leq n\cdot v_i(A_i)+(n-|\widetilde{N})\cdot v_i(o^*|A_i)\\
    &\leq n\cdot v_i(A_i)+n\cdot v_i(o^*|A_i)\\
    &= n\cdot v_i(A_i\cup\{o^*\}),
\end{align*}
again completing the proof.
\end{proof}

Theorem~\ref{thm:ef1-implies-prop1} has important algorithmic implications. Using the well-known result of~\citet{LMMS04} that the Envy-Cycle Elimination algorithm produces EF1 allocations when applied to monotone allocation instances, we get the following corollary.
\begin{corollary}\label{cor:ece}
    Given an allocation instance with monotone submodular goods, the Envy-Cycle Elimination algorithm returns a PROP1 allocation.
\end{corollary}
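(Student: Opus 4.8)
The plan is to combine the freshly-proved Theorem~\ref{thm:ef1-implies-prop1} with the classical guarantee of the Envy-Cycle Elimination algorithm, so the argument is really a two-step syllogism rather than a self-contained computation. First I would invoke the result of \citet{LMMS04}: when run on an instance whose valuations are monotone, the Envy-Cycle Elimination algorithm terminates with an EF1 allocation. By definition, monotone submodular goods have non-negative marginals and are therefore monotone, so this precondition is satisfied and the algorithm's output $A$ is EF1.

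Next I would observe that the class of monotone submodular goods sits inside the class of satiating submodular goods (as recorded in Figure~\ref{fig:valuations}): a monotone submodular valuation is in particular a submodular valuation in the general, possibly-negative-marginal sense. Consequently Theorem~\ref{thm:ef1-implies-prop1} applies verbatim to the instance, and since $A$ is EF1 we conclude that $A$ is PROP1, which is exactly the claim.

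I do not expect a genuine obstacle here, since both ingredients are already available; the only thing worth a second look is that the hypotheses of the two cited results line up cleanly. Specifically, one should confirm that the monotonicity that Envy-Cycle Elimination relies on—so that handing an unenvied agent an additional good never lowers any agent's value and the EF1 invariant is maintained throughout the run—is precisely what ``non-negative marginals'' supplies, and that Theorem~\ref{thm:ef1-implies-prop1} was established for the full satiating submodular regime and hence a fortiori covers this monotone subcase. Both checks are immediate, and the corollary follows.
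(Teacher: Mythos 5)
Your proposal is correct and matches the paper's own argument exactly: the paper derives Corollary~\ref{cor:ece} by combining the result of \citet{LMMS04} (Envy-Cycle Elimination yields EF1 on monotone instances) with Theorem~\ref{thm:ef1-implies-prop1}, noting that monotone submodular goods are a special case of satiating submodular goods. Your additional checks that the hypotheses align are sound and require nothing further.
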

We will extend and significantly improve Corollary~\ref{cor:ece} in Sections~\ref{sec:round-robin} and~\ref{sec:subadditive}, respectively.

The result in Theorem~\ref{thm:ef1-implies-prop1} is best possible. The proof of the next theorem uses an allocation instance with monotone valuations that are slightly more general than submodular.
\begin{theorem}\label{thm:ef1-not-imply-prop1}
    There exists an allocation instance with monotone XOS goods that has an EF1 allocation that is not PROP1.
\end{theorem}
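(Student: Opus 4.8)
The plan is to give an explicit instance, guided by exactly where the proof of Theorem~\ref{thm:ef1-implies-prop1} breaks once we leave the submodular world. That proof controls the aggregate marginal $v_i(O\mid M\setminus O)$ by the sum of single-good marginals $\sum_{o\in O}v_i(o\mid A_i)$ via Claim~\ref{claim:submodular-vs-additive}; for a merely XOS valuation this step fails, and the slack is largest when single goods have zero marginal over $A_i$ yet are valuable in bulk. I would manufacture this with increasing marginals, a phenomenon XOS allows but submodularity forbids.

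Concretely, I would take $n\ge 3$ agents and goods $M=\{p\}\cup B$ with $|B|=2(n-1)$. Agent~$1$ gets the XOS valuation $v_1=\max\{f,g\}$, where $f$ is the additive clause assigning value $1$ to $p$ and $0$ to everything else, and $g$ is the additive clause assigning value $1$ to every good of $B$ and $0$ to $p$; equivalently $v_1(S)=\max\{[\,p\in S\,],\,|S\cap B|\}$. The allocation sets $A_1=\{p\}$ and hands two goods of $B$ to each of the agents $2,\dots,n$. Each agent $j\ge 2$ gets an additive valuation that is worth $1$ on each good of $A_j$ and $0$ on every other good.

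The verification I envisage has three steps. (i) Monotonicity and XOS: both clauses of $v_1$ are monotone, so $v_1$ is monotone XOS, and the remaining agents are additive. (ii) EF1: agents $2,\dots,n$ value only their own bundle and are therefore envy-free; for agent~$1$ we have $v_1(A_1)=1$ and $v_1(A_j)=2$, but deleting either good of $A_j$ leaves a single $B$-good of value $1=v_1(A_1)$, so the EF1 condition is met. (iii) Failure of PROP1: clause $g$ gives $v_1(M)=|B|=2(n-1)$, so agent~$1$'s threshold is $2(n-1)/n$, which exceeds $1$ for every $n\ge 3$; yet $A_1=\{p\}$ is saturated, since $v_1(A_1\cup\{b\})=\max\{1,1\}=1$ for every $b\in B$, the only goods agent~$1$ can add. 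Hence $v_1(A_1\cup\{b\})=1<2(n-1)/n$ for all available goods, so agent~$1$ is not PROP1 and the EF1 allocation is not PROP1.

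The delicate point—and the reason the example needs $n\ge 3$ and a non-submodular valuation—is the subadditivity tension. Subadditivity forces $v_1(b\mid A_1)\le v_1(b)$, so no single good can have marginal above its stand-alone value; the trick is instead to drive the marginal over the particular bundle $A_1$ all the way to $0$ while the $B$-goods remain valuable together. This is precisely the increasing-marginal behaviour produced by the two-clause maximum: clause $g$ overtakes clause $f$ only after several $B$-goods accumulate, so a $B$-good is worthless next to $p$ but worth $1$ once its fellows are present. I expect the main work to be balancing the three requirements simultaneously: $A_1$ saturated, every other bundle collapsing to $v_1(A_1)$ after one deletion (to keep EF1), and $v_1(M)>n\cdot v_1(A_1)$ (to break PROP1). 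The construction degenerates at $n=2$, where the threshold becomes $2(n-1)/n=1=v_1(A_1\cup\{b\})$ and PROP1 is restored, matching the positive result for two agents and satiating subadditive goods.
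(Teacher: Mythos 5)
Your construction is correct and takes essentially the same approach as the paper: an explicit monotone XOS instance built around a valuation plateau, so that deleting one good from an envied bundle drops its value to the envious agent's own value (giving EF1), while adding one good to her own bundle leaves her on the plateau, strictly below the threshold $v_1(M)/n$ (breaking PROP1). The paper's instance uses three identical agents with a cardinality-based XOS valuation over seven goods, whereas yours is parameterized by $n\geq 3$ with one XOS agent and $n-1$ additive agents, but the underlying increasing-marginals mechanism is the same.
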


\begin{proof}
    Consider an allocation instance with three agents and seven items. Each agent has the valuation function $v$ depicted in Table~\ref{tab:ef1-not-imply-prop1}. Notice that $v(S)$ depends only on the cardinality of $S$. We can verify that $v$ is XOS. Indeed, using the positive integers in $[7]$ to identify the items, we have $v(S)=\max_{j=0,1, .., 7}{\sum_{g\in S}{f_j(g)}}$ with $f_0(g)=1$ for $g\in [7]$,  $f_g(g)=2$ for $g\in [7]$, and $f_t(g)=0$ for $t,g\in [7]$ with $t\not=g$.
\begin{table}[ht]
    \centering
    \begin{tabular}{c|c c c c c c c}
        $|S|$ & $1$ & $2$ & $3$ & $4$ & $5$ & $6$ & $7$ \\\hline
        $v(S)$ & $2$ & $2$ & $3$ & $4$ & $5$ & $6$ & $7$
    \end{tabular}
    \caption{The valuation function of the agents in the instance used in the proof of Theorem~\ref{thm:ef1-not-imply-prop1}.}
    \label{tab:ef1-not-imply-prop1}
\end{table}

Now, consider an allocation $A=(A_1, A_2, A_3)$ with $|A_1|=1$, $|A_2|=|A_3|=3$. Clearly, agents $2$ and $3$ are non-envious. For every strict subset of bundles $A_2$ and $A_3$, agent $1$ has value $2$, i.e., equal to her value $v_1(A_1)$ in allocation $A$. Hence, allocation $A$ is EF1. It is not PROP1, though. Agent $1$ has still value $2$ for any bundle obtained by adding an extra item to her bundle $A_1$, while her proportionality threshold is $7/3$.
\end{proof}

We remark that the proof of Theorem~\ref{thm:ef1-not-imply-prop1} uses an instance with three agents. Interestingly, for two-agent allocation instances and satiating subadditive goods, EF1 implies PROP1 as the next statement indicates.
\begin{theorem}\label{thm:ef1-implies-prop1-two-agents}
In allocation instances with satiating subadditive goods and two agents, any EF1 allocation is also PROP1.
\end{theorem}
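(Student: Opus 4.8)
The plan is to fix an arbitrary agent $i$ and verify the PROP1 condition for her alone; since $n=2$, letting $j$ denote the unique other agent, $(A_i, A_j)$ is a partition of $M$ and the proportionality threshold is $\frac12 v_i(M)$. I would split into two cases according to whether agent $i$ envies agent $j$, mirroring the two clauses in the EF1 definition.

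In the easy case $v_i(A_i) \geq v_i(A_j)$, applying subadditivity to the partition $M = A_i \cup A_j$ gives $v_i(M) \leq v_i(A_i) + v_i(A_j) \leq 2 v_i(A_i)$, so agent $i$ already meets her threshold and PROP1 (indeed proportionality) holds. The interesting case is $v_i(A_i) < v_i(A_j)$, where EF1 supplies a good $g \in A_j$ with $v_i(A_j \setminus \{g\}) \leq v_i(A_i)$. Since $g \notin A_i$, the pair $(A_i \cup \{g\}, A_j \setminus \{g\})$ is again a two-part partition of $M$, so subadditivity yields $v_i(M) \leq v_i(A_i \cup \{g\}) + v_i(A_j \setminus \{g\}) \leq v_i(A_i \cup \{g\}) + v_i(A_i)$. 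The right-hand side is at most $2 \max\{v_i(A_i), v_i(A_i \cup \{g\})\}$, hence $\frac12 v_i(M) \leq \max\{v_i(A_i), v_i(A_i \cup \{g\})\}$.

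The main obstacle is that, because the goods are satiating, the marginal $v_i(g|A_i)$ may be negative, so one cannot simply declare PROP1 to be witnessed by $g$, as one could for monotone goods (where $v_i(A_i \cup \{g\}) \geq v_i(A_i)$ automatically). The max-argument above is exactly what sidesteps this: if the maximum equals $v_i(A_i)$ then proportionality holds outright, while if it equals $v_i(A_i \cup \{g\})$ then the PROP1 condition is witnessed by the good $g \notin A_i$ — either way agent $i$'s PROP1 condition is met. I expect this to be the only delicate point.

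I would finally note that both cases invoke subadditivity only on a two-part partition of $M$, never the stronger submodular marginal inequality of Claims~\ref{claim:submodularity-alt} and~\ref{claim:submodular-vs-additive}. This is precisely why two agents suffice: with more agents one would have to aggregate several removed goods (as in the proof of Theorem~\ref{thm:ef1-implies-prop1}), and bounding that sum of marginals genuinely requires submodularity — consistent with the three-agent monotone XOS instance of Theorem~\ref{thm:ef1-not-imply-prop1}, which shows EF1 need not imply PROP1 for subadditive valuations beyond two agents.
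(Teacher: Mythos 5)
Your proposal is correct and follows essentially the same route as the paper's own proof: the same case split on whether $v_i(A_i)\geq v_i(A_j)$, the same subadditive decomposition of $M$ into $A_i\cup\{g\}$ and $A_j\setminus\{g\}$ in the envy case, and the same $\max\{v_i(A_i),\,v_i(A_i\cup\{g\})\}$ conclusion that correctly handles a possibly negative marginal of $g$. Your closing remark about why the argument stops at two agents is a nice observation, but the core argument is identical to the paper's.
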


\begin{proof}
Consider an allocation instance with two agents having subadditive valuations for subsets of a set of satiating goods and let $A=(A_1,A_2)$ be an EF1 allocation. Without loss of generality, we will focus on agent $1$ and prove that allocation $A$ satisfies the PROP1 condition for this agent.

First, assume that $v_1(A_1)\geq v_1(A_2)$. Using subadditivity and this inequality, we get
\begin{align*}
    v_1(A_1\cup A_2) &\leq v_1(A_1)+v_1(A_2) \leq 2\cdot v_1(A_1),
\end{align*}
completing the proof. Otherwise, due to the fact that allocation $A$ is EF1, there exists an item $g$ in the bundle $A_2$, such that $v_1(A_1)\geq v_1(A_2\setminus\{g\})$. Using subadditivity and this inequality, we get
\begin{align*}
    v_1(A_1\cup A_2) &\leq v_1(A_2\setminus\{g\})+v_1(A_1\cup \{g\})\\
    &\leq v_1(A_1)+v_1(A_1\cup \{g\})\\
    &\leq 2\cdot \max\{v_1(A_1),v_1(A_1\cup \{g\})\},
\end{align*}
again completing the proof.
\end{proof}

\section{Does Round-Robin compute PROP1 allocations?}\label{sec:round-robin}
We now focus on the Round-Robin protocol, arguably the simplest algorithm for allocating indivisible goods. Round-Robin starts with an empty allocation and runs in rounds. In each round, the agents act in a fixed order. When it is the turn of an agent to act, they pick the best item that is still available, i.e., the available item that has the maximum marginal value for the agent. For satiating goods, this marginal value can be negative.

Our main result in this section (Corollary~\ref{cor:rr}) applies to allocation instances with monotone submodular goods. For this case, \citet{ABLLR23} have proven that the allocation returned by Round-Robin is not always EF1. Somewhat surprisingly, we prove that it is PROP1, using different arguments than those developed in Section~\ref{sec:ef1-implies-prop1}.

For satiating submodular goods, Round-Robin achieves PROP1 only partially, as the next statement indicates.
\begin{theorem}\label{thm:rr-second-last-phase}
    In every allocation instance with satiating submodular goods, the partial allocation produced by Round-Robin after the second-to-last round is PROP1.
\end{theorem}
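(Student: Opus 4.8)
The plan is to fix a single agent $i$ and bound $v_i(M)$ from above by $n$ times the best value $i$ can secure by adjoining one good to her current bundle. Write $B_i=\{g_i^1,\dots,g_i^s\}$ for the bundle of $i$ after the second-to-last round, where $g_i^t$ is her round-$t$ pick and $s$ is the number of completed rounds; set $B_i^{(t)}=\{g_i^1,\dots,g_i^t\}$ and $\mu_t=v_i(g_i^t\mid B_i^{(t-1)})$, so that $v_i(B_i)=\sum_{t=1}^{s}\mu_t$. Let $U=M\setminus\bigcup_j B_j$ be the still-unallocated goods; since exactly one round remains, $1\le|U|\le n$.

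First I would record two structural facts. \emph{Non-increasing marginals:} the set of available goods only shrinks between consecutive picks of $i$ and $B_i^{(t-1)}\subseteq B_i^{(t)}$, so submodularity (Claim~\ref{claim:submodularity-alt}) gives $\mu_1\ge\mu_2\ge\cdots\ge\mu_s$, with $\mu_1=v_i(g_i^1)\ge 0$. \emph{Greedy domination:} if a good $o$ is still available when $i$ makes her round-$t$ pick, then $v_i(o\mid B_i^{(t-1)})\le\mu_t$, and hence $v_i(o\mid B_i)\le\mu_t$ by submodularity.

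The core computation starts from subadditivity (Claim~\ref{claim:submodular-vs-additive}): $v_i(M)\le v_i(B_i)+\sum_{o\in M\setminus B_i}v_i(o\mid B_i)$. I would charge each good of $M\setminus B_i$ to one of $i$'s picks that was available and dominates it, with at most $n-1$ goods charged per pick; the goods that cannot be charged — those grabbed before $i$'s first pick, together with a short ``tail'' of the remaining goods — form a set $L$ with $|L|\le n$. When every pick's block receives exactly $n-1$ goods, the charged goods contribute at most $(n-1)v_i(B_i)$ (the uniform count makes this an equality of coefficients, valid for either sign of $\mu_t$), and rewriting $v_i(o\mid B_i)=v_i(B_i\cup\{o\})-v_i(B_i)$ on $L$ yields
\[
v_i(M)\le (n-|L|)\,v_i(B_i)+\sum_{o\in L}v_i(B_i\cup\{o\})\le n\cdot\max\Bigl\{v_i(B_i),\ \max_{o\in L}v_i(B_i\cup\{o\})\Bigr\},
\]
where the last inequality uses $|L|\le n$ together with the non-negativity of $v_i$. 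This already establishes PROP1 for $i$, and it covers every agent that is served in the final round (equivalently, every $i$ with $i\le|U|$), as well as all blocks led by picks of rounds $1,\dots,s-1$.

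The delicate case, which I expect to be the main obstacle, is a \emph{late} agent $i$ whose index exceeds $|U|$ and who therefore is not served in the last round: the block led by her second-to-last pick $g_i^s$ is forced to be under-full, since fewer than $n-1$ goods were still available when she made it, so the clean accounting above leaves an unmatched term proportional to $\mu_s$. The hard part is that $\mu_s$ may be \emph{negative} — round-robin can force $i$ to take a good of negative marginal value — which is precisely the phenomenon that destroys PROP1 after the \emph{last} round and is the reason the statement stops one round earlier. I plan to control this residual term by again invoking non-negativity (every block of an $n$-fold subadditive split of $M$ has value at least $0$) and by choosing the adjoined good globally — the best among the early-grabbed goods and the remaining goods — so that the negative contribution of $g_i^s$ is absorbed rather than charged. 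Turning this absorption step into a rigorous inequality, rather than the routine charging in the full-block case, is where the real work lies.
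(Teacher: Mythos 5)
Your clean case is fine, and it is essentially the paper's own argument (the paper telescopes $v_i(M)$ over nested prefixes instead of conditioning every marginal on $B_i$, but the charging scheme --- exactly $n-1$ goods per pick, dominated via greediness plus Claim~\ref{claim:submodularity-alt} and Claim~\ref{claim:submodular-vs-additive}, plus a leftover set of at most $n$ goods handled through the best addable good --- is the same). The genuine gap is exactly the case you flag and leave open, and it is not a routine residue that ``absorption'' will fix. Let $D$ denote the goods that were still available at $i$'s pick $s$ but not taken by her, so $|D|=n-i+|U|\le n-2$ in your delicate case. The strongest consequence of non-negativity available there comes from applying it to $B_i\cup D$: since every $o\in D$ satisfies $v_i(o|B_i)\le\mu_s$, Claim~\ref{claim:submodular-vs-additive} gives $0\le v_i(B_i\cup D)\le v_i(B_i)+|D|\mu_s$, i.e., $|\mu_s|\le v_i(B_i)/|D|$ when $\mu_s<0$. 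Plugging this into your accounting still leaves the unmatched excess $(i-1-|U|)\,|\mu_s|\le\frac{i-1-|U|}{|D|}\,v_i(B_i)$, so the best you reach is
\begin{align*}
v_i(M)&\le \Bigl(n+\tfrac{i-1-|U|}{|D|}\Bigr)\cdot\max\Bigl\{v_i(B_i),\ \max_{o\notin B_i}v_i(B_i\cup\{o\})\Bigr\},
\end{align*}
an approximate version of PROP1, not PROP1. Choosing the adjoined good ``globally'' cannot help either: the goods in $D$ have negative marginals relative to $B_i$, so they are worse candidates than $B_i$ itself, and the early-grabbed goods in your set $L$ already appear in the bound with coefficient one each, leaving no slack to spend against the excess.

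The missing idea is the paper's very first step: a normalization that makes the delicate case impossible. The paper pads the instance with dummy goods of identically zero marginal value (leaving all other marginals unchanged) so that $n$ divides $|M|$, and argues that a PROP1 allocation of the padded instance yields a PROP1 allocation of the original one after the dummies are deleted. After padding, every round of Round-Robin is complete, so for every agent the block of each of her picks $1,\dots,s$ contains exactly $n-1$ goods and the uncharged set (the goods grabbed before her first pick, those grabbed after her last pick, and her own final-round pick) contains exactly $n$ goods; in other words, only your clean case ever occurs, for every agent, and the under-full block never arises. With that normalization your argument goes through essentially verbatim; without it, the statement you actually need in the delicate case remains, as far as your proposal shows, unproven. (If you adopt the padding, do spell out why the conclusion transfers to the execution on the original instance --- dummies interact with greedy choices precisely when all remaining real goods have negative marginals, which is the same regime that causes your trouble.)
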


\begin{proof}
Consider an allocation instance with $n$ agents having submodular valuations for subsets of a set of satiating goods $M$. 

We will assume that $n$ divides $|M|$. If this was not the case, we can transform the initial instance to a new one by adding $\left\lceil|M|/n\right\rceil\cdot n-|M|$ {\em dummy} items and extend the valuation functions as follows: For every dummy item $g_1$, agent $i\in [n]$, and bundle of items $S$ such that $g_1\not\in S$, it holds $v_i(g_1|S)=0$ and, furthermore, for every other item $g_2$ with $g_2\not=g_1$ and $g_2\not\in S$, it holds $v_i(g_2|S)=v_i(g_2|S\cup\{g_1\})$. It can be easily seen that any bundle of items in the new instance has the same value for any agent in the original instance after removing the dummy items. Hence, any PROP1 allocation for the modified instance yields a PROP1 instance for the original instance.

Let $i\in [n]$ be an agent, $A=(A_1, ..., A_n)$ be the allocation computed by Round-Robin, and let $L=|M|/n$ be the number of rounds, i.e., $|A_i|=L$. For $k\in [L]$, denote by $A_i^k$ the set of the $k$ first items allocated by Round-Robin to agent $i$. Let $o^*$ be an item that satisfies $o^*\in \argmax_{o\in M\setminus A_i^{L-1}}{v_i(o|A_i^{L-1})}$. We will prove the theorem by showing that $v_i(M)\leq n\cdot v_i(A_i^{L-1}\cup \{o^*\})$.

We define the following sets of items allocated to all agents besides agent $i$. For $k=1, ..., L$, the set $\Pi_k$ contains all the items allocated to agents different from $i$ before the $k$-th item is allocated to agent $i$. Also, we define as $\Pi_{L+1}$ the set of items allocated to agents different from $i$ after the last item is allocated to agent $i$. Notice that $|\Pi_k\setminus \Pi_{k-1}|=n-1$ for $k=2, ..., L$, and $|\Pi_1\cup \Pi_{L+1}|= n-1$. Using these definitions, we have 
\begin{align}\nonumber
    v_i(M)&=v_i(A_i^{L-1})\\\nonumber
    &\quad\, +v_i\left(\Pi_1\cup \Pi_{L+1}\cup(A_i\setminus A_i^{L-1})|A_i^{L-1}\right)\\\label{eq:three-terms}
    &\quad\,  +\sum_{k=2}^L{v_i\left(\Pi_k\setminus \Pi_{k-1}|\Pi_{k-1}\cup \Pi_{L+1} \cup A_i\right)}.
\end{align}
We will bound the second and third terms in the RHS of equation (\ref{eq:three-terms}) separately.

First, using Claim~\ref{claim:submodular-vs-additive}, the definition of item $o^*$, and the fact that the set $\Pi_1\cup \Pi_{L+1}\cup (A_i\setminus A_i^{L-1})$ consists of $n$ items which also belong to $M\setminus A_i^{L-1}$, we have
\begin{align}\nonumber
    &v_i(\Pi_1\cup \Pi_{L+1}\cup (A_i\setminus A_i^{L-1})|A_i^{L-1})\\\nonumber
    &\leq \sum_{o\in \Pi_1\cup \Pi_{L+1}\cup (A_i\setminus A_i^{L-1})}{v_i(o|A_i^{L-1})}\\\nonumber
    &\leq \sum_{o\in \Pi_1\cup \Pi_{L+1}\cup (A_i\setminus A_i^{L-1})}{v_i(o^*|A_i^{L-1})}\\\label{eq:second-term}
    &= n\cdot v_i(o^*|A_i^{L-1}).
\end{align}

Next, recall that the $j$-th item that is allocated by Round-Robin to agent $i$ with $j\in [L-1]$ is at least as valuable in terms of marginal value to agent $i$ as any item that is allocated by Round-Robin later. We formally state this observation as follows.

\begin{claim}\label{claim:rr}
    For every $j\in [L]$ and every good $o\in M\setminus (\Pi_j\cup A_i^j)$, it holds that $v_i(A_i^j\setminus A_i^{j-1}|A_i^{j-1})\geq v_i(o|A_i^{j-1})$.
\end{claim}

We are now ready to bound the third term in the RHS of Equation (\ref{eq:three-terms}) above. We have
\begin{align}\nonumber
    &\sum_{k=2}^L{v_i\left(\Pi_k\setminus \Pi_{k-1}|\Pi_{k-1}\cup \Pi_{L+1}\cup A_i\right)}\\\nonumber
    &\leq \sum_{k=2}^L{\sum_{o\in \Pi_k\setminus \Pi_{k-1}}{v_i\left(o|\Pi_{k-1}\cup \Pi_{L+1}\cup A_i\right)}}\\\nonumber
    &\leq \sum_{k=2}^L{\sum_{o\in \Pi_k\setminus \Pi_{k-1}}{v_i\left(o|A_i^{k-2}\right)}}\\\nonumber
    &\leq \sum_{k=2}^L{\sum_{o\in \Pi_k\setminus \Pi_{k-1}}{v_i\left(A_i^{k-1}\setminus A_i^{k-2}|A_i^{k-2}\right)}}\\\nonumber
    &=(n-1)\cdot \sum_{k=2}^L{v_i\left(A_i^{k-1}\setminus A_i^{k-2}|A_i^{k-2}\right)}\\\label{eq:third-term}
    &=(n-1)\cdot v_i(A_i^{L-1}).
\end{align}
The first inequality follows by Claim~\ref{claim:submodular-vs-additive}, the second by the definition of submodularity and since $A_i^{k-2}\subseteq \Pi_{k-1}\cup \Pi_{L+1}\cup A_i$, and the third by applying Claim~\ref{claim:rr} with $j=k-1$ and since $\Pi_k\setminus \Pi_{k-1}\subseteq M\setminus (\Pi_{k-1}\cup A_i^{k-1})$. The first equality follows since $\Pi_k\setminus \Pi_{k-1}$ consists of $n-1$ terms and the second by definition.

Now, using the inequalities (\ref{eq:second-term}) and (\ref{eq:third-term}), Equation (\ref{eq:three-terms}) yields
\begin{align*}
    v_i(M) &\leq n\cdot v_i(A_i^{L-1})+n\cdot v_i(o^*|A_i^{L-1})\\
    &=n\cdot v_i(A_i^{L-1}\cup\{o^*\}),
\end{align*}
as desired.
\end{proof}

One might think that the fact that Theorem~\ref{thm:rr-not-prop1-non-mon} refers to the partial allocation computed at the end of the second-to-last round, as opposed to the final complete allocation returned by Round-Robin, is just a weakness in our analysis. This is not the case, though, as the next statement shows. 
\begin{theorem}\label{thm:rr-not-prop1-non-mon}
    There exists an allocation instance with satiating submodular goods for which Round-Robin does not produce a PROP1 allocation.
\end{theorem}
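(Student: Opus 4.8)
The plan is to exhibit an explicit instance with satiating submodular valuations on which the \emph{final} Round-Robin allocation fails PROP1 for the agent that acts last. Since Theorem~\ref{thm:rr-second-last-phase} guarantees that the partial allocation after round $L-1$ is already PROP1, all the damage must be done in the last round, so I would engineer the instance around a single event: in round $L$ the last agent---call her agent $n$---has her most valuable remaining item $o^*$ grabbed by an earlier agent who acts before her, and is then left to take the leftover (hence forced) item $x$ whose marginal value, because the goods are satiating, is negative once $o^*$ or the rest of her bundle is present. The bundle $A_n=A_n^{L-1}\cup\{x\}$ then sits below the proportionality threshold $v_n(M)/n$, and---this is the point---no single good can lift it back up.

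For the valuation of agent $n$ I would use a \emph{coverage-minus-cost} function $v_n(S)=\big|\bigcup_{o\in S}E_o\big|-\gamma|S|$, where each item $o$ covers a set of elements $E_o$ and $\gamma>0$ is a small per-item cost. Such a function is submodular (a monotone coverage function plus the modular term $-\gamma|S|$) and it is genuinely satiating: a redundant item, one covering only already-covered elements, has marginal $-\gamma<0$. The elements are laid out so that agent $n$ ends up covering fewer than half of all elements, while her missing elements are \emph{spread} one per item across the bundles of the other agents. I would give the remaining agents valuations (themselves monotone submodular, hence a special case) that make them prefer exactly the items carrying agent $n$'s missing elements, so that Round-Robin hands those items away before agent $n$ can reach them; the stolen item $o^*$ is the richest of these, and agent $n$'s forced pick $x$ covers the same single element as $o^*$, so that $o^*$ becomes redundant for her once $x$ is in hand.

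Concretely I would proceed as follows. First, fix the elements, the item--element incidences, and the agents' valuations, and verify both submodularity (via the non-increasing-marginal property of Claim~\ref{claim:submodularity-alt}) and non-negativity on every realized bundle, which is where the smallness of $\gamma$ and a bound on the redundancy of any single bundle are used. Second, simulate Round-Robin step by step, pinning down the tie-breaking so that agent $n$ is indeed forced into the substitute pick $x$ in round $L$ while $o^*$ lands in an earlier agent's bundle. Third, compute $v_n(M)$ and hence the threshold $v_n(M)/n$. Finally, verify the two PROP1-violating inequalities, $v_n(A_n)<v_n(M)/n$ and $v_n(A_n\cup\{g\})<v_n(M)/n$ for every $g\notin A_n$; the binding case is $g=o^*$, for which the substitute structure makes $v_n(x\mid A_n^{L-1}\cup\{o^*\})<0$, while every other good covers at most one element that agent $n$ lacks and hence contributes only about $1-\gamma$.

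The main obstacle is submodularity itself, which rules out the obvious constructions. One cannot use a ``trap'' item that cancels a fixed amount of accumulated value, because that would make the canceled goods complementary, i.e.\ supermodular; and one cannot simply plant an item with a very negative marginal, because by Claim~\ref{claim:submodularity-alt} such an item also lowers $v_n(M)$, and hence the threshold $v_n(M)/n$, so a large negative marginal is not by itself helpful. A short counting argument isolates the real constraint: if every addable good had $v_n(g\mid A_n)\le 0$ then subadditivity would force $v_n(M)\le v_n(A_n)$ and PROP1 would hold, so a failure needs $\max_{g\notin A_n}v_n(g\mid A_n)>0$ together with $v_n(A_n)+\max_g v_n(g\mid A_n)<v_n(M)/n$; that is, the shortfall must be assembled from \emph{many} small positive marginals and must strictly exceed the single largest of them. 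Turning this necessary ``spread'' into an actual instance---forcing agent $n$'s coverage below half while keeping each recoverable good worth only one element, and making Round-Robin's greedy picks realize exactly this split---requires several rounds and asymmetric valuations, and getting all of these to hold simultaneously is the delicate part of the proof.
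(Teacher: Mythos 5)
Your proposal has two problems, one of completeness and one of substance. The completeness problem is that no instance is ever constructed: the incidences $E_o$, the value of $\gamma$, the other agents' valuations, and the round-by-round simulation are all deferred, and you yourself flag the simultaneous satisfaction of these constraints as ``the delicate part of the proof.'' The substantive problem is worse: the mechanism you commit to cannot produce a counterexample at all. With $v_n(S)=\bigl|\bigcup_{o\in S}E_o\bigr|-\gamma|S|$, the uniform per-item cost never changes which available item maximizes agent $n$'s marginal, so any Round-Robin execution on your instance is also a legitimate Round-Robin execution on the instance where agent $n$ has the pure coverage valuation $\hat v_n(S)=\bigl|\bigcup_{o\in S}E_o\bigr|$, which is monotone submodular; by Corollary~\ref{cor:rr}, that execution ends PROP1 with respect to $\hat v_n$. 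Your perturbation then moves agent $n$'s (augmented) bundle value down by $\gamma(|A_n|+1)$ and the threshold down by $\gamma m/n$ --- two quantities that differ by at most $2\gamma$. Since coverage values are integers, a PROP1 violation for $v_n$ would therefore force the coverage guarantee to be exactly tight: $|C|<|U|/n$ while $\bigl|C\cup E_g\bigr|\le |U|/n$ for \emph{every} $g\notin A_n$, where $C$ is agent $n$'s final coverage. But the greedy structure of Round-Robin excludes this: each good taken by another agent in round $r\ge 2$ was available to agent $n$ in round $r-1$, so its new coverage relative to $C$ is at most her $(r-1)$-th greedy gain $\mu_{r-1}$, and summing (with the tightness bound $|U|/n-|C|$ on the $n-1$ round-one goods) gives $|U|-|C|\le (n-1)(|U|/n-|C|)+(n-1)|C|$, i.e.\ $|C|\ge |U|/n$, a contradiction. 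In short: redundancy can only ever cost $\gamma$ per item, and that is the wrong scale --- the satiation damage and the threshold shift cancel.

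The gap traces precisely to the paragraph where you dismiss ``planting an item with a very negative marginal'' on the grounds that, by Claim~\ref{claim:submodularity-alt}, it also lowers $v_n(M)$ and hence the threshold. That objection overlooks the factor-$n$ asymmetry which is the entire point: the forced pick damages the victim's bundle by its full negative marginal $-c$, while the threshold $v_n(M)/n$ absorbs only (roughly) $-c/n$ of that damage. This is exactly the paper's construction: two agents, six items $a,b,c,d,g_1,g_2$; agent $1$ is additive and wants only $a,b,c$; agent $2$ has constant marginals $3,3,2,2,2$ for $g_1,g_2,a,b,c$, while $d$ has marginal $2$ unless the bundle already contains both $g_1$ and $g_2$, in which case its marginal is $-5$ (marginals are non-increasing, so this is satiating submodular). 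Round-Robin hands agent $1$ the items $a,b,c$ and forces agent $2$ into $\{g_1,g_2,d\}$ with value $3+3-5=1$; here $v_2(M)=7$, the threshold is $7/2$, and adding any remaining good raises agent $2$ only to $3<7/2$. Your observation that, by Theorem~\ref{thm:rr-second-last-phase}, all the damage must happen in the last round is correct, and your ``spread'' condition (several positive marginals, none individually sufficient) is indeed necessary and is satisfied by the paper's example via $a,b,c$; but the engine of the violation is the large negative marginal of the forced last pick, not a coverage deficit, and by ruling that engine out you ruled out the (apparently only simple) route to a proof.
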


\begin{proof}
We use an allocation instance with two agents and six items $a$, $b$, $c$, $d$, $g_1$, and $g_2$. Agent $1$ has an additive valuation function, with value $1$ for items $a$, $b$, and $c$ and value $0$ for items $d$, $g_1$, and $g_2$. Agent $2$ has the valuation function defined in Table~\ref{tab:satiating}. Items $g_1$, $g_2$, $a$, $b$, and $c$ have constant marginal values $v(\cdot|S)$ of $3$, $3$, $2$, $2$, and $2$, respectively, when added to a bundle $S$ not containing item $d$. Item $d$ has marginal value $v_2(d|S)$ equal to $-5$ if set $S$ contains both $g_1$ and $g_2$, and equal to $2$ otherwise. As marginal values are non-increasing but possibly negative, this is an instance of satiating submodular goods.

\begin{table}[ht]
    \centering
    \begin{tabular}{c c|c }
        item $o$ & bundle $S$ & $v_2(o|S)$ \\\hline
        $g_1$, $g_2$ & $d\not\in S$ & $3$\\
        $a,b,c$ & $d\not\in S$ & $2$\\
        $d$ & $\{g_1,g_2\}\not\subseteq S$ & $2$\\
        $d$ & $\{g_1,g_2\}\subseteq S$ & $-5$\\
    \end{tabular}
    \caption{The valuation function of agent $2$ in the proof of Theorem~\ref{thm:rr-not-prop1-non-mon}.}
    \label{tab:satiating}
\end{table}

By applying Round-Robin, agent $1$ picks the items $a$, $b$, and $c$ in the first three rounds. Agent $2$ picks items $g_1$ and $g_2$ in the first two round and is left with item $d$ in the third round. We have $v_2(M)=7$ while $v_2(\{g_1,g_2,d,o\})=3<v_2(M)/2$ for every $o\in \{a,b,c\}$, violating the PROP1 condition.
\end{proof}

Our next positive result for monotone submodular goods follows easily using Theorem~\ref{thm:rr-second-last-phase} and by arguing about the additional value the agents get in the last round of Round-Robin.

\begin{corollary}\label{cor:rr}
    For every allocation instance with monotone submodular goods, Round-Robin returns a PROP1 allocation.
\end{corollary}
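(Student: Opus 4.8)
The plan is to bootstrap Theorem~\ref{thm:rr-second-last-phase}: that theorem already guarantees that the \emph{partial} allocation after the second-to-last round is PROP1, so the only new ingredient is to argue that completing the allocation with one final round cannot destroy PROP1 when valuations are monotone. First I would reduce to the case where $n$ divides $|M|$ using exactly the dummy-item padding from the proof of Theorem~\ref{thm:rr-second-last-phase}; since a dummy item has zero marginal value everywhere, the padded instance is still monotone submodular, and a PROP1 allocation of the padded instance restricts to a PROP1 allocation of the original. With $n \mid |M|$, every agent receives $L=|M|/n$ items, and the partial allocation after round $L-1$ gives each agent $i$ the bundle $A_i^{L-1}$ analysed in Theorem~\ref{thm:rr-second-last-phase}.

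Fix an agent $i$ and let $o^*$ be the item identified in that proof, so that $v_i(A_i^{L-1}\cup\{o^*\})\geq \frac{1}{n}\,v_i(M)$, where $o^*\in M\setminus A_i^{L-1}$. The complete bundle is $A_i=A_i^{L-1}\cup\{x\}$, with $x$ the single item $i$ picks in the final round; in particular $A_i\supseteq A_i^{L-1}$. I would then split on the location of $o^*$. If $o^*\in A_i$, then necessarily $x=o^*$ (as $o^*\notin A_i^{L-1}$), so $v_i(A_i)=v_i(A_i^{L-1}\cup\{o^*\})\geq \frac{1}{n}\,v_i(M)$ and even outright proportionality holds. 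If $o^*\notin A_i$, then $o^*$ is a good not allocated to $i$, and since $A_i\cup\{o^*\}\supseteq A_i^{L-1}\cup\{o^*\}$, monotonicity yields $v_i(A_i\cup\{o^*\})\geq v_i(A_i^{L-1}\cup\{o^*\})\geq \frac{1}{n}\,v_i(M)$, so the PROP1 condition for agent $i$ is witnessed by $g=o^*$. As $i$ was arbitrary, the returned allocation is PROP1.

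The conceptual point — and the reason this fails for satiating goods, as Theorem~\ref{thm:rr-not-prop1-non-mon} shows — is that monotonicity forces the marginal contribution of the last-round item to be non-negative, so adding it can only help: the PROP1 witness $o^*$ that worked after round $L-1$ keeps working (or is subsumed by full proportionality) for the complete allocation, whereas a negative last-round marginal could pull an agent back below the threshold, which is exactly what the six-item instance exploits. Because Theorem~\ref{thm:rr-second-last-phase} carries all the combinatorial weight, I expect no genuine difficulty in the extension itself; the only step needing care is the bookkeeping of the dummy-item reduction, i.e.\ confirming that the transfer of PROP1 to the original instance goes through, which holds precisely because dummy items have zero marginal value.
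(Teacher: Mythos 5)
Your proposal is correct and follows essentially the same route as the paper: invoke Theorem~\ref{thm:rr-second-last-phase} for the partial allocation after round $L-1$, then use monotonicity to show the final round cannot break PROP1, splitting on whether the witness item ends up in agent $i$'s final bundle. The only cosmetic differences are that you redo the dummy-item padding (which the paper keeps inside the theorem's proof rather than the corollary's) and that you use the specific witness $o^*$ from that proof instead of the theorem's disjunctive statement, neither of which affects correctness.
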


\begin{proof}
Consider an allocation instance with $n$ agents having submodular valuations for subsets of a set of satiating goods $M$. Let $L=\left\lceil |M|/n \right\rceil$ be the number of rounds executed by Round-Robin. For agent $i\in [n]$ and $k\in [L]$, denote by $A_i^k$ the set of items allocated by Round-Robin to agent $i$ in rounds $1, 2, ..., k$ and let $A=(A_1, A_2, ..., A_n)$ be the final complete allocation. By Theorem~\ref{thm:rr-second-last-phase}, we get that either $v_i(M)\leq n\cdot v_i(A_i^{L-1})$ or there exists an item $g\not\in A_i^{L-1}$ such that $v_i(M)\leq n\cdot v_i(A_i^{L-1}\cup \{g\})$. 

If $v_i(M)\leq n\cdot v_i(A_i^{L-1})$, monotonicity yields $v_i(A_i^{L-1})\leq v_i(A_i)$ and, hence, $v_i(M)\leq n\cdot v_i(A_i)$ as well. If $v_i(M)\leq n\cdot v_i(A_i^{L-1}\cup \{g\})$, we distinguish between two cases. First, if item $g$ is allocated to agent $i$ in the last round by Round-Robin, we have $v_i(A_i)=v_i(A_i^{L-1}\cup \{g\})$ and, hence, $v_i(M)\leq v_i(A_i)$. Otherwise, monotonicity implies that $v_i(A_i^{L-1}\cup \{g\})\leq v_i(A_i\cup \{g\})$ and, hence, $v_i(M)\leq n\cdot v_i(A_i\cup \{g\})$, completing the proof.
\end{proof}

The result in Corollary~\ref{cor:rr} is best possible. The proof of the next theorem uses an allocation instance with monotone valuations that are slightly more general than submodular.

\begin{theorem}\label{thm:rr-not-prop1-xos}
    There exists an allocation instance with monotone XOS goods for which Round-Robin does not produce a PROP1 allocation.
\end{theorem}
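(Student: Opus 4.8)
The plan is to build a two-agent instance with monotone XOS valuations on which the greedy step of Round-Robin lures one agent into collecting a set of ``decoy'' items that carry high individual marginal value but almost no cumulative value, while the items that actually determine that agent's proportionality threshold are swept up by the other agent. This is the natural place to attack the algorithm: by Corollary~\ref{cor:rr} Round-Robin succeeds on monotone submodular goods, so the example must exploit the one feature that separates XOS from submodular, namely that a marginal value may \emph{increase} as a bundle grows.

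Concretely, I would partition the items into a ``bulk'' set $T=\{t_1,\dots,t_L\}$ and a ``decoy'' set $D=\{d_1,\dots,d_L\}$ of equal size. Agent~$1$ gets a monotone XOS valuation written as a maximum of additive clauses: one clause $f_0$ that assigns value $1$ to each bulk item and $0$ to each decoy, and for every decoy $d_j$ a clause $h_j$ that assigns a large value $H\in(1,L/2)$ to $d_j$, a tiny value $\epsilon>0$ to every other decoy, and $0$ to every bulk item. Agent~$2$ is additive, valuing each bulk item at $1$ and each decoy at $0$. Setting $v_1(S)=\max\{f_0(S),\max_j h_j(S)\}$ exhibits $v_1$ as a maximum of non-negative additive functions, hence monotone XOS; I would note in passing that it is \emph{not} submodular, since a bulk item's marginal is suppressed to $0$ once a decoy is held but reappears once enough bulk items accumulate.

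The heart of the argument is tracing Round-Robin with agent~$1$ moving first. In the opening round agent~$1$ strictly prefers a decoy, whose marginal $H$ beats the bulk marginal $1$; thereafter each further decoy still carries the small positive marginal $\epsilon$, whereas any bulk item has marginal $0$ (the dominant decoy clause already outweighs $f_0$ on agent~$1$'s growing bundle), so agent~$1$ keeps choosing decoys in every round, while agent~$2$, indifferent to decoys, takes a fresh bulk item each round. I would check that the two equal-sized sets are exactly exhausted after $L$ rounds, leaving $A_1=D$ and $A_2=T$. The deliberate $\epsilon$-perturbation is precisely what makes these picks strictly optimal, so the conclusion does not rest on any favorable tie-breaking; verifying this forced behavior is the one place that needs care and is the main obstacle.

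Finally I would read off the proportionality failure. Agent~$1$'s value on the whole set is governed by the bulk clause, giving $v_1(M)=L$ and a threshold $L/2$, whereas $v_1(A_1)=H+(L-1)\epsilon$. Since every remaining item is a bulk item and adding one of them leaves the decoy clause dominant, $v_1(A_1\cup\{g\})=H+(L-1)\epsilon$ as well. Choosing the parameters so that $H+(L-1)\epsilon<L/2$ (for instance $H=2$ with $\epsilon$ small and $L$ large, or a clean integer rescaling) forces both $v_1(A_1)$ and every augmented bundle below the threshold, so the PROP1 condition fails for agent~$1$ and the allocation returned by Round-Robin is not PROP1.
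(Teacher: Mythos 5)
Your proposal is correct and follows essentially the same approach as the paper: a two-agent instance in which one agent is additive and the other has a monotone XOS ``decoy'' valuation that lures its greedy Round-Robin picks away from the clause that determines its proportionality threshold, so that its final bundle plus any single extra item stays strictly below half of $v(M)$. The differences are cosmetic: the paper uses a fixed six-item instance with two XOS clauses and checks both agent orderings, whereas you use a parameterized family ($L$, $H$, $\epsilon$) with an $\epsilon$-perturbation to force strict preferences (and your instance in fact yields the same allocation under either ordering as well).
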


\begin{proof}
We prove the theorem using an allocation instance with two agents and six items $a$, $b$, $c$, $d$, $e$, and $f$. Agent $1$ has the additive valuation function depicted in Table~\ref{tab:rr-not-prop1-xos-agent-1}.
\begin{table}[ht]
    \centering
    \begin{tabular}{c|c c c c c c}
        item & $a$ & $b$ & $c$ & $d$ & $e$ & $f$ \\\hline
        $v_1(\cdot)$ & $3$ & $2$ & $1$ & $0$ & $0$ & $0$      
    \end{tabular}
    \caption{The additive valuation function of agent $1$ used in the proof of Theorem~\ref{thm:rr-not-prop1-xos}.}
    \label{tab:rr-not-prop1-xos-agent-1}
\end{table}

Agent $2$ has an XOS valuation function $v_2$ that uses the two additive valuation functions $f_1$ and $f_2$ depicted in Table~\ref{tab:rr-not-prop1-xos-agent-2}. In particular, $v_2(S)=\max\{f_1(S), f_2(S)\}$ for every $S\subseteq \{a,b,c,d,e,f\}$.
\begin{table}[ht]
    \centering
    \begin{tabular}{c|c c c c c c}
        item & $a$ & $b$ & $c$ & $d$ & $e$ & $f$ \\\hline
        $f_1(\cdot)$ & $5$ & $5$ & $5$ & $0$ & $0$ & $4$\\
        $f_2(\cdot)$ & $0$ & $0$ & $0$ & $6$ & $2$ & $1$\\
    \end{tabular}
    \caption{The additive valuation functions that are used in the definition of the XOS valuation function $v_2$ of agent $2$ in the proof of Theorem~\ref{thm:rr-not-prop1-xos}.}
    \label{tab:rr-not-prop1-xos-agent-2}
\end{table}

Agent $2$ has item $d$ as the most valuable singleton. Then, the marginal valuation $v_2(o|d)$ is maximized for item $e$ with $v_2(e|d)=2$. Also, the marginal valuation $v_2(o|\{d,e\})$ is maximized to $1$ for item $f$; any other marginal is equal to $0$. So, in any execution of Round-Robin, agent $2$ will receive the items $d$, $e$, and $f$ (in this order) if they are available. Clearly, agent $1$ has the items $a$, $b$, and $c$ as the most valuable ones. 

Hence, in any of the two possible executions of Round-Robin (depending on the ordering of the agents), the resulting allocation will be $A=(\{a,b,c\},\{d,e,f\})$. Notice that $v_2(\{a,b,c,d,e,f\})=19$ while for every $o\in \{a,b,c\}$, it holds $v_2(\{d,e,f,o\})=9<v_2(\{a,b,c,d,e,f\})/2$, implying that allocation $A$ is not PROP1.
\end{proof}

\section{Computing PROP1 allocations for satiating subadditive goods}\label{sec:subadditive}

In this section, we present our strongest algorithmic result, stated as follows.
\begin{theorem}\label{thm:subadditive}
    There exists a polynomial-time algorithm that, on input any allocation instance with satiating subadditive goods, returns a PROP1 allocation.
\end{theorem}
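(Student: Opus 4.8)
The plan is to start from an arbitrary allocation $A=(A_1,\dots,A_n)$ and repeatedly relocate goods so that more and more agents meet the PROP1 condition, certifying termination with a potential function. Write $t_i:=\frac1n v_i(M)$ for the proportionality threshold of agent $i$, and call agent $i$ \emph{satisfied} if $v_i(A_i)\ge t_i$ or $v_i(A_i\cup\{g\})\ge t_i$ for some $g\notin A_i$. Two structural observations drive the argument. First, an \emph{averaging} bound: iterating subadditivity over the partition gives $\sum_{j\in[n]}v_i(A_j)\ge v_i(M)=n\,t_i$, so in every allocation each agent $i$ values some bundle at least at its threshold, i.e.\ there is an agent $b(i)$ with $v_i(A_{b(i)})\ge t_i$; moreover, if this maximizer is $i$ itself then $i$ is already satisfied, so an unsatisfied agent always covets the bundle of \emph{another} agent. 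Second, an \emph{add-back} observation: if $v_i(A_i)\ge t_i$ and we remove a single good $g$ from $A_i$, then $i$ is still satisfied, since $v_i((A_i\setminus\{g\})\cup\{g\})=v_i(A_i)\ge t_i$ and $g$ now lies outside $i$'s bundle. Thus a strictly proportional agent stays satisfied after losing one good.

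To satisfy a currently unsatisfied agent $i$, I would exploit the averaging bound: some agent $b(i)$ holds a bundle that $i$ values at least $t_i$, and giving $i$ that \emph{entire} bundle makes $i$ strictly proportional, hence satisfied. This is crucial because it sidesteps the obstruction that for satiating goods a single good added to $A_i$ may have nonpositive marginal value. The transfer is realized by moving goods from bundle to bundle one at a time. When several unsatisfied agents covet one another's bundles, the cleanest device is to rotate whole bundles along a cycle $i_1\to i_2\to\cdots\to i_k\to i_1$, where each $i_\ell$ values $A_{i_{\ell+1}}$ at least at $t_{i_\ell}$: after the rotation every $i_\ell$ holds a bundle worth at least $t_{i_\ell}$, so all $k$ of them become strictly proportional at once while no agent off the cycle is touched.

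The primary progress measure is the number of satisfied agents. A cycle rotation is an atomic move that strictly increases this count and harms no one, giving clean progress whenever the directed graph in which each unsatisfied agent points to an agent whose bundle it values at least at its threshold contains a cycle of unsatisfied agents. When no such cycle exists — the coveted bundle of every unsatisfied agent is held by a satisfied agent — I would instead transfer goods from a strictly proportional donor, using the add-back observation to keep that donor satisfied after it gives up a single good, and certify termination with a potential function that the moves are designed to improve. I would argue the number of satisfied agents never decreases and increases within polynomially many moves; since each move and each satisfaction test uses only polynomially many value-oracle queries, the whole algorithm runs in polynomial time. (For satiating \emph{submodular} goods one can seed this process with the Round-Robin allocation of Section~\ref{sec:round-robin} to obtain the faster combined algorithm.)

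The main obstacle is precisely the combination of non-monotonicity and subadditivity. For monotone submodular goods one could raise an unsatisfied agent to its threshold by adding positive-marginal goods one at a time, but here subadditivity yields only the upper bound $v_i(T\mid A_i)\le\sum_{g\in T}v_i(g)$, and every single good may have nonpositive marginal to $A_i$ even when a whole set of goods has positive marginal; a greedy one-good-at-a-time build-up therefore need not make progress, which is why relocating a coveted bundle as a unit is the right device. The technical heart is to show that the process never cycles: one must prove that the relocations can always be carried out consistently — that the covets graph admits a usable cycle, or else that a safe transfer from a strictly proportional donor exists — while invoking the add-back observation to guarantee that no previously satisfied agent is pushed below its guarantee by more than the one good it can recover. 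Exhibiting the potential that forbids cycling in this second case is the step I expect to require the most care.
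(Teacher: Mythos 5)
Your skeleton is essentially that of the paper's Algorithm~\ref{alg:subadditive} --- start anywhere, rotate whole bundles along covet cycles, otherwise move single goods out of a proportional donor's bundle (kept safe by your add-back observation), and use the number of satisfied agents as the progress measure --- but the two steps you leave open are exactly the substance of the theorem, and as written your moves can deadlock. First, the donor may not exist: when there is no cycle among \emph{unsatisfied} agents, you assert a strictly proportional donor, but your averaging bound only yields, for each agent, some coveted bundle; every satisfied agent may be satisfied merely in the up-to-one-good sense, and such an agent cannot donate (add-back needs $v_j(A_j)\geq v_j(M)/n$; after removing a good from a merely-PROP1 bundle, nothing guarantees any single good restores the threshold). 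Nothing in your argument excludes a configuration with an unsatisfied agent, no cycle among unsatisfied agents, and no proportional agent at all --- the covet cycle guaranteed by averaging may pass entirely through satisfied-but-not-proportional agents, which your rotation rule refuses to rotate. The paper avoids this by rotating most-envied cycles over \emph{all} agents (the subroutine $\im(\cdot)$) after every single transfer; Lemma~\ref{lem:after-improve} then guarantees $P(A)\neq\emptyset$ whenever a donor is needed.

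Second, and more fundamentally, the termination argument you defer is the heart of the proof, and it cannot be a ``local'' potential: with satiating goods, pushing goods one at a time into an unsatisfied agent's bundle need never satisfy that agent, since $v_i(A_i\cup A_{b(i)})$ can stay below the threshold even though $v_i(A_{b(i)})$ exceeds it, so no quantity measured at the receiving agent improves. The paper's resolution is global and indirect: the set of PROP1 agents never shrinks (Lemma~\ref{lem:subset}), and if it stayed fixed for $m$ consecutive iterations of the while-loop, then --- because a good moved into a non-PROP1 agent's bundle can never leave it while that set is static (any reassignment by $\im(\cdot)$ to such an agent would make the agent proportional and enlarge the set) --- all $m$ goods would end up held by non-PROP1 agents, leaving every proportional agent with an empty bundle and contradicting Lemma~\ref{lem:after-improve}; this is Lemma~\ref{lem:pot-increases}, and it bounds the running time by $O(nm)$ transfers. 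That counting contradiction, together with the all-agents rotation it relies on, is precisely the step you flagged as open; without it the proposal is a plan rather than a proof.
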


\begin{proof}
We will prove Theorem~\ref{thm:subadditive} using Algorithm~\ref{alg:subadditive}. The input of the algorithm is an allocation instance with $n$ agents and a set $M$ of $m$ satiating subadditive goods. The algorithm uses the three subroutines $P(\cdot)$, $P1(\cdot)$, and $\im(\cdot)$. $P(\cdot)$ takes as input an allocation and returns the set of agents that satisfy the proportionality conditions. Formally, an agent $i$ belongs to set $P(A)$ for an allocation $A=(A_1, A_2, ..., A_n)$ if $n\cdot v_i(A_i)\geq v_i(M)$. Similarly, $P1(\cdot)$ takes as input an allocation and returns the set of agents that satisfy the PROP1 conditions. Formally, an agent $i$ belongs to set $P1(A)$ for an allocation $A=(A_1, A_2, ..., A_n)$ if it is either $n\cdot v_i(A_i) \geq v_i(M)$ (i.e., if $i\in P(A)$) or there exists an item $g\not\in A_i$ such that $n\cdot v_i(A_i\cup \{g\}) \geq v_i(M)$.

\begin{algorithm}[ht]
\caption{An algorithm producing a PROP1 allocation for allocation instances with satiating subadditive goods}\label{alg:subadditive}
\begin{algorithmic}[1]
\REQUIRE An allocation instance with $n$ agents and a set of satiating subadditive goods $M$
\ENSURE A PROP1 allocation $A$
\STATE $A\gets$ an arbitrary allocation
\STATE $A \gets \im(A)$
\WHILE {$P1(A)\not=[n]$}
\STATE $i\gets$ an arbitrary agent in $[n]\setminus P1(A)$
\STATE $j\gets$ an arbitrary agent in $P(A)$
\STATE $g\gets$ an arbitrary item in $A_j$ 
\STATE $A_j\gets A_j\setminus \{g\}$; $A_i\gets A_i\cup \{g\}$
\STATE $A\gets \im(A)$
\ENDWHILE
\RETURN $A$
\end{algorithmic}
\end{algorithm}

The subroutine $\im(\cdot)$ takes as input an allocation $A$ and works as follows. If there is an agent $i\in [n]$ with $i\in \argmax_{j\in [n]}{v_i(A_j)}$, $\im(A)$ returns $A$. Otherwise, it builds the directed graph $G(A)$ containing a node corresponding to each agent $i\in [n]$. For each agent $i\in [n]$, $G(A)$ contains the directed edge $(i,e_i)$ where $e_i$ is an agent in $[n]$ that agent $i$ envies the most, i.e., $e_i\in \argmax_{j\in [n]\setminus\{i\}}{v_i(A_j)}$. Since each node of graph $G(A)$ has out-degree $1$, $G(A)$ contains at least one cycle. The call of $\im(A)$ identifies an arbitrary such cycle $C=(c_1, ...c_{k})$ with $e_{c_k}=c_1$ and $e_{c_t}=c_{t+1}$ for $t=1, 2, ..., k-1$ and redistributes the bundles of $A$ so that the agent (corresponding to node) $c_k$ gets bundle $A_{c_1}$ and agent $c_t$ gets bundle $A_{c_{t+1}}$ for $t=1, 2, ..., k-1$. By the definition of subroutine $\im(\cdot)$, we have the following property.

\begin{lemma}\label{lem:after-improve}
Let $A$ be the allocation returned after an application of subroutine $\im(\cdot)$. Then, $P(A)\not=\emptyset$. Furthermore, every agent who was reassigned a bundle by subrouting $\im(\cdot)$ belongs to set $P(A)$.
\end{lemma}

\begin{proof}
Consider the application of subroutine $\im(\cdot)$ on an allocation $A$. If it did not modify allocation $A$, this means that there is some agent $i\in [n]$ such that $v_i(A_i)\geq v_i(A_j)$ for every agent $j\in [n]$. Summing over these $n$ inequalities and using the subadditivity of the valuation function $v_i$, we get
    \begin{align*}
        n\cdot v_i(A_i) &\geq \sum_{j\in [n]}{v_i(A_j)} \geq v_i(M),
    \end{align*}
    implying the proportionality condition for agent $i$ and, hence, $i\in P(A)$.

If the application of subroutine $\im(\cdot)$ modified allocation $A$, then a non-empty set of agents were reallocated their most valuable bundle. For each such agent $i$, their valuation after the redistribution of the bundles is $v_i(A_i)\geq v_i(A_j)$ for every $j\in [n]$ (here, we use $A$ to denote the allocation obtained after the application of subrouting $\im(\cdot)$). Using the same argument as in the previous paragraph, we get that all these agents satisfy the proportionality conditions and belong to $P(A)$.
\end{proof}

Algorithm~\ref{alg:subadditive} works as follows. It starts with an arbitrary allocation (line 1) on which subrouting $\im(\cdot)$ is applied (line 2). Then, as long as there are agents for whom the PROP1 condition is not satisfied (i.e., the condition $P1(A)\not=[n]$ in line 3), the algorithm runs the following loop. It identifies an arbitrary agent $i$ for whom the PROP1 condition is not satisfied (line 4) and an arbitrary agent $j$ for whom the proportionality condition is satisfied (line 5). Such agents do exist since the algorithm entered the while-loop and allocation $A$ has been obtained after applying subroutine $\im(\cdot)$. Then, the algorithm picks an arbitrary item $g$ from the bundle $A_j$ (in line 6; again, such an item exists because agent $j$ belongs to $P(A)$ and, thus, satisfies the proportionality condition, i.e., $v_j(A_j)\geq v_j(M)/n>0$). The current allocation is modified (in line 7) by moving item $g$ from bundle $A_j$ to bundle $A_i$, and subroutine $\im(\cdot)$ is applied to this modified allocation (line 8). When leaving the while-loop, the algorithm returns the current allocation (line 10).

The above discussion guarantees that all steps in the while-loop are well-defined. Furthermore, if the algorithm exits the while-loop and terminates, it will clearly return a PROP1 allocation (since $P1(A)=[n]$, i.e., all agents would satisfy the PROP1 condition in this case). To complete the proof of correctness, we need to prove that the algorithm terminates given any allocation instance with satiating subadditive goods.

We use a potential function argument. Denote by $A^0$ the allocation obtained after the execution of line 2 of Algorithm~\ref{alg:subadditive} and by $A^t$ the allocation obtained after the $t$-th execution of line 8 (i.e., after the $t$-th execution of the while-loop). With some abuse in notation, we use $P1(A^t)$ and $P(A^t)$ to denote the set of agents for whom the PROP1 and proportionality conditions are satisfied in allocation $A^t$. 

\begin{lemma}\label{lem:subset}
    For every integer $t\geq 0$, $P1(A^t)\subseteq P1(A^{t+1})$.
\end{lemma}

\begin{proof}
Let $i$, $j$, and $g$ be the agent of $[n]\setminus P1(A^t)$, the agent of $P(A^t)$, and the item of bundle $A^t_j$ selected during the $(t+1)$-th execution of the while-loop, respectively. Denote by $A'$ the current allocation after the execution of line 7 (and before the execution of line 8) in the $(t+1)$-th execution of the while-loop. Notice that the set $P1(A^t)\setminus\{j\}$ consist of agents $k$ satisfying $A'_k=A^t_k$, since the only difference between allocations $A^t$ and $A'$ is the item $g$ which is moved from the bundle of agent $j$ to that of agent $i$ who, by definition, does not belong to set $P1(A^t)$. For agent $j$, if her valuation increased after removing item $g$ from her bundle, then
\begin{align*}
    v_j(A'_j) &> v_j(A_j^t)\geq v_j(M)/n.
\end{align*}
Otherwise,
\begin{align*}
    \max_{g'\in M\setminus A'_j}{v_j(A'_j\cup \{g'\})} &\geq v_j(A'_j\cup \{g\})\\
    &= v_j(A^t_j) \geq v_j(M)/n.
\end{align*}
The last inequality in the two derivations above is due to the fact $j\in P(A^t)$. Hence, in any case, agent $j$ belongs to $P1(A')$ and, thus, $P1(A^t)\subseteq P1(A')$. 

The allocation $A^{t+1}$ is obtained by applying subroutine $\im(\cdot)$ on allocation $A'$. For every agent of $P1(A')$ who gets the same bundle in $A'$ and $A^{t+1}$, the PROP1 conditions are trivially satisfied in $A^{t+1}$ and the agent belongs to $P1(A^{t+1})$. By Lemma~\ref{lem:after-improve}, every agent of $P1(A')$ who was reassigned a new bundle during the application of subroutine $\im(\cdot)$
on allocation $A'$, satisfies the proportionality condition and belongs to $P(A^{t+1})$ and, consequently, to $P1(A^{t+1})$. Hence, $P1(A')\subseteq P1(A^{t+1})$, which completes the proof.
\end{proof}
Lemma~\ref{lem:subset} proves the monotonicity of the quantity $|P1(A^t)|$ in terms of number of executions $t$ of the while-loop of Algorithm~\ref{alg:subadditive}. To show that the algorithm terminates, we furthermore need to show that $|P1(A^t)|$ strictly increases after a polynomial number of executions of the while-loop. This is proved in the next lemma.
\begin{lemma}\label{lem:pot-increases}
    For every non-negative integer $t$ with $|P1(A^t)|<n$, there is an integer $t'\leq t+m$ such that $|P1(A^{t'})|>|P1(A^t)|$.
\end{lemma}

\begin{proof}
For the sake of contradiction, let $t$ be a non-negative integer such that $|P1(A^t)|<n$ and assume that $|P1(A^{t+m})|\leq |P1(A^t)|$. By Lemma~\ref{lem:subset}, this means that $P1(A^{t+m})=P1(A^t)$. Hence, after the $t$-th execution and until after the $(t+m)$-th execution of the while-loop, $m$ distinct items (i.e., all items) have moved to bundles of agents in $[n]\setminus P1(A^{t+m})$. The agents in set $P1(A^{t+m})$ have no value at all in allocation $A^{t+m}$. Hence, $v_k(A^{t+m}_k)<v_k(M)/n$, for each agent $k\in [n]$ and $P(A^{t+m})=\emptyset$, which contradicts Lemma~\ref{lem:after-improve}, as allocation $A^{t+m}$ is obtained after running the subroutine $\im(\cdot)$. 
\end{proof}

Hence, Lemma~\ref{lem:pot-increases} implies that every at most $m$ executions of the while-loop, the number of agent satisfying the PROP1 conditions increases by at least $1$. This means that Algorithm~\ref{alg:subadditive} computes a PROP1 allocation after at most $n\cdot m$ executions of the while-loop. The proof of Theorem~\ref{thm:subadditive} is now complete.
\end{proof}

The proof of Theorem~\ref{thm:subadditive} essentially shows that Algorithm~\ref{alg:subadditive} computes a PROP1 allocation for satiating subadditive goods after $O(n\cdot m)$ item allocations or reallocations. For allocation instances with satiating submodular goods, we can compute a PROP1 allocation much faster by combining Round-Robin and Algorithm~\ref{alg:subadditive}. The main change is to replace line 1 of Algorithm~\ref{alg:subadditive} by an execution of Round-Robin until its second-to-last round. By Theorem~\ref{thm:rr-second-last-phase}, the partial allocation computed by Round-Robin after the second-to-last round satisfies PROP1 for each agent. We can complete this allocation by assigning all items left for the last Round-Robin round to a single arbitrary agent; this complete allocation satisfies PROP1 for all agents but one. By running Algorithm~\ref{alg:subadditive} using this initial allocation, we have $|P1(A^0)|\geq n-1$ initially, and, using Lemma~\ref{lem:pot-increases}, we get that all agents will satisfy PROP1 after at most $m$ executions of the while-loop of Algorithm~\ref{alg:subadditive}. The following statement summarizes this discussion.

\begin{theorem}
    There exists a polynomial-time algorithm which, given an allocation instance with $m$ satiating submodular goods, returns a PROP1 allocation after at most $O(m)$ item allocations and reallocations.
\end{theorem}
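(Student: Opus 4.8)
The plan is to run Algorithm~\ref{alg:subadditive} essentially unchanged, replacing only its initial allocation (line~1) by one that is already PROP1 for all but one agent. Then the while-loop must raise the potential $|P1(\cdot)|$ only from $n-1$ to $n$, i.e.\ by a single unit, which by Lemma~\ref{lem:pot-increases} costs at most $m$ iterations rather than the $n\cdot m$ iterations of the general analysis. This single-unit reduction is the whole source of the speedup.

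First I would run Round-Robin up to and including its second-to-last round, producing a partial allocation $(A_1^{L-1},\dots,A_n^{L-1})$ with $L=\lceil m/n\rceil$; by Theorem~\ref{thm:rr-second-last-phase} this partial allocation is PROP1 for every agent (for each $i$, either $v_i(M)\leq n\cdot v_i(A_i^{L-1})$ or some good $o\notin A_i^{L-1}$ satisfies $v_i(M)\leq n\cdot v_i(A_i^{L-1}\cup\{o\})$), at a cost of $(L-1)n<m$ item allocations. I would then complete it into a full allocation $A^0$ by giving all at-most-$n$ unallocated items (those destined for the last round) to a single fixed agent, say agent~$1$. Since every other agent $i\neq 1$ keeps an untouched bundle, $v_i(A_i^0)=v_i(A_i^{L-1})$ and the witnessing good $o$ is still absent from $A_i^0$, so the PROP1 condition survives for all $n-1$ agents $i\neq 1$; thus $|P1(A^0)|\geq n-1$, at the cost of at most $n\leq m$ further allocations.

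Finally I would feed $A^0$ into Algorithm~\ref{alg:subadditive} from line~2 onward. The $\im(\cdot)$ call in line~2 cannot shrink $P1$: agents keeping their bundle remain PROP1, while reassigned agents become proportional by Lemma~\ref{lem:after-improve}; so $|P1|\geq n-1$ still holds when the loop starts, and Lemmas~\ref{lem:after-improve} and~\ref{lem:subset} make every iteration well-defined and the potential monotone. Since $|P1|<n$, Lemma~\ref{lem:pot-increases} forces a strict increase within $m$ iterations, so $|P1|$ reaches $n$ and the loop halts after at most $m$ passes. Each pass moves exactly one item (line~7), while the bundle rotations inside $\im(\cdot)$ are counted as relabelings rather than item moves, exactly as in the remark following Theorem~\ref{thm:subadditive}. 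Summing the three phases yields $O(m)$ item allocations and reallocations in total and a PROP1 allocation at termination.

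The step I expect to be most delicate is isolating the claim that the completion step preserves PROP1 for the $n-1$ untouched agents and that this is not spoiled by the subsequent $\im(\cdot)$ call; once this is secured, the improvement over Theorem~\ref{thm:subadditive} reduces entirely to the observation that only one unit of potential must be gained, so that Lemma~\ref{lem:pot-increases} is invoked a single time. The remainder is the routine bookkeeping of item moves across the Round-Robin phase, the completion, and the at most $m$ loop iterations.
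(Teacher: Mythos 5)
Your proposal is correct and follows essentially the same route as the paper: replace line~1 of Algorithm~\ref{alg:subadditive} by Round-Robin up to its second-to-last round (PROP1 for all agents by Theorem~\ref{thm:rr-second-last-phase}), dump the remaining items on one agent so that $|P1(A^0)|\geq n-1$, and then invoke Lemma~\ref{lem:pot-increases} once to bound the while-loop by $m$ iterations. Your explicit check that the line-2 call to $\im(\cdot)$ cannot shrink $P1$ is a detail the paper leaves implicit, but it is the same argument already used in the proof of Lemma~\ref{lem:subset}.
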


\section{Can PROP1 allocations be Pareto-optimal?}\label{sec:prop1-and-pareto}
We now explore the compatibility of PROP1 and Pareto-optimality. Following~\citet{CKM+19}, we extend the definition of the maximum Nash welfare allocation to capture instances in which no allocation yields positive valuation to all agents for their bundles. So, the maximum Nash welfare allocation is one that maximizes the number of agents with non-zero valuation for their bundle and, under this condition, it maximizes the product of non-zero agent valuations.

Our first result showcases another facet of the ``unreasonable fairness'' of allocations with maximum Nash welfare.

\begin{theorem}\label{thm:max-nsw}
    In any allocation instance with monotone submodular goods, the maximum Nash welfare allocation is PROP1.
\end{theorem}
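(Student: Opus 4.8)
The plan is to show that for each agent $i$, either the proportionality condition holds directly, or we can construct an item transfer that would increase the Nash welfare, contradicting maximality. The key observation, following the spirit of the EF1-to-PROP1 implication (Theorem~\ref{thm:ef1-implies-prop1}), is that the maximum Nash welfare allocation satisfies a local exchange-stability property: moving any single item from one agent's bundle to another cannot increase the product of valuations. I would first handle the degenerate case where some agents have zero valuation under the optimal allocation; by the extended definition of maximum Nash welfare from~\citet{CKM+19}, the allocation first maximizes the number of agents with positive value, so these zero-valued agents receive nothing of positive marginal value, and I would need to argue that monotonicity forces $v_i(M)=0$ for such agents (otherwise they could be handed items to raise their count), making PROP1 trivial for them.

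For an agent $i$ with positive value, I would fix the optimal allocation $A=(A_1,\dots,A_n)$ and, for each other agent $j$, consider transferring a well-chosen item $o_j\in A_j$ to agent $i$. The exchange-optimality of maximum Nash welfare gives, for each such transfer, an inequality of the form $v_i(A_i\cup\{o_j\})\cdot v_j(A_j\setminus\{o_j\})\le v_i(A_i)\cdot v_j(A_j)$. The natural approach is to pick $o_j$ to be the item maximizing the marginal $v_i(o|A_i)$ within $A_j$ (or to use a charging argument across all items), and then to combine these $n-1$ inequalities multiplicatively or additively to bound $v_i(M)$ from above in terms of $n\cdot v_i(A_i\cup\{o^*\})$, where $o^*$ is the globally best such transfer item. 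Here I would lean on submodularity via Claim~\ref{claim:submodular-vs-additive} to write $v_i(M)\le v_i(A_i)+\sum_{o\in M\setminus A_i}v_i(o|A_i)$ and then control the sum by grouping items according to which bundle they sit in.

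The main obstacle, I expect, is converting the \emph{multiplicative} Nash-welfare inequalities into the \emph{additive} proportionality bound cleanly. The transfer inequality $v_j(A_j\setminus\{o_j\})/v_j(A_j)\ge v_i(A_i)/v_i(A_i\cup\{o_j\})$ controls how much value agent $j$ retains relative to how much agent $i$ could gain, but translating this into a usable bound on the total marginal value $\sum_{o\in A_j}v_i(o|A_i)$ that the items of $A_j$ contribute to agent $i$ requires care, since the ratio involves agent $j$'s valuation, not agent $i$'s. I anticipate the cleanest route is to avoid per-item transfers for every item and instead argue as follows: for each $j\neq i$, if the single best item $o_j\in A_j$ already satisfies $v_i(A_i\cup\{o_j\})\ge v_i(M)/n$ we are done, so assume not; then the Nash-welfare inequality combined with submodularity must cap the aggregate contribution of $A_j$'s items to $i$, and summing over all $j$ yields $v_i(M)\le n\cdot v_i(A_i\cup\{o^*\})$ for the best global transfer $o^*$. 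The delicate point is ensuring the bound aggregates correctly across bundles without the agent-$j$ valuations obstructing the telescoping, and confirming that the transfer inequalities remain valid even when a transfer would drop agent $j$ to zero value (which interacts with the ``maximize the count of positive agents'' tie-breaking rule and must be checked separately).
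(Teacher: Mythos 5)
Your setup matches the paper's direct proof in outline (per-item transfer inequalities from Nash optimality, a case split on whether $v_i(A_i)>0$, and Claim~\ref{claim:submodular-vs-additive} to reduce $v_i(M)$ to per-item marginals), but both of your cases have genuine gaps. In the degenerate case, your claim that monotonicity forces $v_i(M)=0$ for an agent $i$ with $v_i(A_i)=0$ is simply false: take two agents and a single item valued at $1$ by both; the maximum Nash welfare allocation gives the item to one agent, and the other has zero value for her (empty) bundle yet $v_i(M)=1$. Your ``hand them an item to raise the count'' argument breaks because the transfer may drop the donor to zero value, so the number of positively-valued agents need not increase and no contradiction arises. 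The paper's actual argument for this case is different: it shows that at most $n-1$ items have positive marginal value $v_i(g|A_i)$ for such an agent $i$. Items held by \emph{other} zero-valued agents have zero marginal for $i$ (there the transfer argument does work, since by monotonicity the donor stays at zero and the count would strictly increase); and if some agent $j$ with $v_j(A_j)>0$ held two items $a,b$ both with positive marginal for $i$, then $v_j(A_j\setminus\{a\})=v_j(A_j\setminus\{b\})=0$ (else a transfer raises the count), which together with $v_j(A_j)>0$ contradicts submodularity of $v_j$. PROP1 for agent $i$ then follows from $v_i(M)\le v_i(A_i)+(n-1)\cdot v_i(g^*|A_i)\le n\cdot v_i(A_i\cup\{g^*\})$, where $g^*\in\argmax_{g\in M\setminus A_i}v_i(g|A_i)$.

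In your main case you correctly identify the crux---the transfer inequality involves agent $j$'s valuation, which seems to obstruct aggregation---but your proposed resolution (``if the best item of $A_j$ suffices we are done, else the Nash inequality caps the aggregate contribution'') is not an argument: nothing you wrote actually bounds $\sum_{g\in A_j}v_i(g|A_i)$. The missing idea is to apply the transfer inequality to \emph{every} item $g\in A_j$ (for every $j$ with $v_j(A_j)>0$), giving $v_i(g|A_i)\le v_i(A_i\cup\{g\})\cdot v_j(g|A_j\setminus\{g\})/v_j(A_j)$, and then to invoke the submodular ``sum of last marginals'' bound
\begin{align*}
\sum_{g\in A_j}v_j\bigl(g|A_j\setminus\{g\}\bigr)\;\le\; v_j(A_j),
\end{align*}
proved by ordering $A_j$ as $g_1,\dots,g_{|A_j|}$ and noting $v_j(g_t|A_j\setminus\{g_t\})\le v_j(g_t|\{g_1,\dots,g_{t-1}\})$, whose sum telescopes to $v_j(A_j)$. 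This makes the $v_j(A_j)$ denominators cancel exactly, so each bundle $A_j$ contributes at most $v_i(A_i\cup\{g^*\})$ in total, and summing over the at most $n-1$ positively-valued agents $j\ne i$ yields $v_i(M)\le n\cdot v_i(A_i\cup\{g^*\})$. (The paper also records a shorter route you might prefer: by \citet{CKM+19}, maximum Nash welfare allocations are MEF1 for monotone submodular goods, and MEF1 implies PROP1 by a direct submodularity argument.)
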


We give two proofs of Theorem~\ref{thm:max-nsw}. The first one is direct and long. A shorter proof follows by first using a result of~\citet{CKM+19} stating that maximum Nash welfare allocations have a fairness property called {\em marginal envy-freeness up to some good} (MEF1) and then proving that MEF1 allocations are also PROP1. 

\paragraph{A direct proof of Theorem~\ref{thm:max-nsw}.}
Consider an allocation instance with $n$ agents having monotone submodular valuations over goods, and let $A=(A_1,A_2, ..., A_n)$ be a maximum Nash welfare allocation. Let $i\in [n]$ be an agent and $g^*\in \argmax_{g\in M\setminus A_i}{v_i(g|A_i)}$. We will show that $v_i(M)\leq n\cdot v_i(A_i\cup\{g^*\})$.

Define $\widetilde{N}:=\{j\in [n]: v_j(A_j)>0\}$. We distinguish between two cases depending on whether $i\in \widetilde{N}$ and $i\in [n]\setminus \widetilde{N}$.

\paragraph{Case 1: $i\in \widetilde{N}$.}
Let $j$ be an agent in $\widetilde{N}$ different from $i$ and $g$ an item in bundle $A_j$. Since $A$ is a maximum Nash welfare allocation, the product of the valuations the agents in $\widetilde{N}$ have for their bundles is at least as high as that in the allocation obtained by $A$ after removing item $g$ from bundle $A_j$ and putting it into bundle $A_i$. Thus,
\begin{align*}
    0 &\geq v_i(A_i\cup\{g\})\cdot v_j(A_j\setminus\{g\})-v_i(A_i)\cdot v_j(A_j)\\ &=\left(v_i(A_i)+v_i(g|A_i)\right)\cdot \left(v_j(A_j)-v_j(g|A_j\setminus\{g\})\right)\\
    &\quad\, -v_i(A_i)\cdot v_j(A_j)\\
    &=-(v_i(A_i)+v_i(g|A_i))\cdot v_j(g|A_j\setminus \{g\})\\
    &\quad\, +v_i(g|A_i)\cdot v_j(A_j)\\
    &=-v_i(A_i\cup\{g\})\cdot v_j(g|A_j\setminus\{g\})+v_i(g|A_i)\cdot v_j(A_j)
\end{align*}
and, equivalently,
\begin{align}\label{eq:mnw-consequence-non-zero}
    v_i(g|A_i) &\leq \frac{v_i(A_i\cup \{g\})\cdot v_j(g|A_j\setminus\{g\})}{v_j(A_j)}.
\end{align}

Now, let $j$ be an agent in $[n]\setminus \widetilde{N}$ (if any) and $g$ an item in bundle $A_j$. It holds that 
\begin{align}\label{eq:mnw-consequence-zero}
    v_i(g|A_i) &= 0,
\end{align}
since otherwise it would be $v_i(A_i\cup \{g\})=v_i(A_i)+v_i(g|A_i)>v_i(A_i)$ and we could move item $g$ from bundle $A_j$ to bundle $A_i$ to get an even higher product of valuations of the agents in $\widetilde{N}$ for their bundles, contradicting the definition of allocation $A$.

Then, using Claim~\ref{claim:submodular-vs-additive}, Equations (\ref{eq:mnw-consequence-non-zero}) and~(\ref{eq:mnw-consequence-zero}), and the definition of item $g^*$, we get
\begin{align}\nonumber
    &v_i(M\setminus A_i|A_i)\\\nonumber
    &\leq \sum_{g\in M\setminus A_i}{v_i(g|A_i)}\\\nonumber
    &= \sum_{j\in \widetilde{N}\setminus\{i\}}{\sum_{g\in A_j}{v_i(g|A_i)}}\\\nonumber
    &\leq \sum_{j\in \widetilde{N}\setminus\{i\}}{\sum_{g\in A_j}{\frac{v_i(A_i\cup \{g\})\cdot v_j(g|A_j\setminus\{g\})}{v_j(A_j)}}}\\\nonumber
    &\leq v_i(A_i\cup\{g^*\})\\\label{eq:sum-of-last-marginals}
    &\quad\quad\,\cdot \sum_{j\in \widetilde{N}\setminus\{i\}}{\frac{1}{v_j(A_j)}\cdot \sum_{g\in A_j}{v_j(g|A_j\setminus\{g\})}}.
\end{align}
Now, for each agent $j\in \widetilde{N}\setminus \{i\}$, rename as $g_1$, $g_2$, ..., $g_{|A_j|}$ the items in bundle $A_j$ and, for $t=1, 2, ..., |A_j|-1$, let $A_j^t=\{g_k|1\leq k \leq t\}$ be the sub-bundle of $A_j$ consisting of the items $g_1$, $g_2$, ..., $g_{t-1}$ (if any). Then, $A_j^t\subseteq A_j\setminus \{g_t\}$ and, using the submodularity of $v_j$, we obtain
\begin{align}\nonumber
\sum_{g\in A_j}{v_j(g|A_j\setminus \{g\})} &= \sum_{t=1}^{|A_j|}{v_j(g_t|A_j\setminus \{g_t\})}\\\label{eq:sum-of-last-marginals-bound}
&\leq \sum_{t=1}^{|A_j|}{v_j(g_t|A_j^{t-1})} = v_j(A_j).
\end{align}
Using Equation (\ref{eq:sum-of-last-marginals-bound}), Equation (\ref{eq:sum-of-last-marginals}) now yields
\begin{align}\label{eq:M-setminus_A_i-given-A_i} 
v_i(M\setminus A_i|A_i)&\leq (|\widetilde{N}|-1)\cdot v_i(A_i\cup\{g^*\}).
\end{align}
Hence, using Equation (\ref{eq:M-setminus_A_i-given-A_i}), monotonicity, and the fact $|\widetilde{N}|\leq n$, we obtain
\begin{align*}
    v_i(M) &=v_i(A_i)+v_i(M\setminus A_i|A_i)\\
    &\leq n\cdot v_i(A_i\cup\{g^*\}),
\end{align*}
as desired.

\paragraph{Case 2: $i\in [n]\setminus \widetilde{N}$.} Define the set of items $G_i:=\{g\in M\setminus A_i: v_i(g|A_i)>0\}$. We will prove that $|G_i|\leq n-1$. The proof will then follow since, using Claim~\ref{claim:submodular-vs-additive}, the definition of set $G_i$, and monotonicity, we can get 
    \begin{align*}
        v_i(M) &=v_i(A_i)+v_i(M\setminus A_i|A_i)\\
        & \leq v_i(A_i)+\sum_{g\in M\setminus A_i}{v_i(g|A_i)}\\
        &= v_i(A_i)+\sum_{g\in G_i}{v_i(g|A_i)}\\
        &\leq v_i(A_i)+(n-1)\cdot v_i(g^*|A_i)\\
        &\leq n\cdot v_i(A_i\cup\{g^*\}),
    \end{align*}
    i.e., the PROP1 condition is satisfied for agent $i$.

It remains to show that $|G_i|\leq n-1$. First notice that $v_i(g|A_i)=0$ for every item $g$ allocated to an agent in $[n]\setminus \widetilde{N}$ different than $i$. Indeed, if $v_i(g|A_i)>0$ for an item $g$ allocated to such an agent $i'$, then moving item $g$ from bundle $A_{i'}$ to bundle $A_i$, we would obtain an allocation with a strictly larger number of agents with positive valuation for their bundles compared to $A$, contradicting its definition. 
    
Now, assume that $|G_i|\geq n$. Since all items in $G_i$ have been allocated to agents in $\widetilde{N}$ under $A$, there is an agent $j\in \widetilde{N}$ and two items $a$ and $b$ in $A_j$ such that $v_i(a|A_i)>0$ and $v_i(b|A_i)>0$. We now claim that $v_j(A_j\setminus \{a\})=v_j(A_j\setminus \{b\})=0$. Indeed, if, e.g., $v_j(A_j\setminus \{a\})>0$, then moving item $a$ from the bundle $A_j$ to the bundle $A_i$, we would get an allocation with a strictly larger number of agents with positive value compared to $A$, again contradicting its definition. 
    
We now show that the conditions $v_j(A_j\setminus \{a\})=v_j(A_j\setminus \{b\})=0$ and $v_j(A_j)>0$ violate submodularity. Indeed, we have $0=v_j(A_j\setminus \{a\})=v_j(A_j\setminus \{a,b\})+v_j(b|A_j\setminus \{a,b\})$, which implies that $v_j(A_j\setminus \{a,b\})=0$ and $v_j(b|A_j\setminus \{a,b\})=0$. We also have $0<v_j(A_j)=v_j(A_j\setminus \{b\})+v_j(b|A_j\setminus \{b\})=v_j(b|A_j\setminus \{b\})$. Hence, $v_j(b|A_j\setminus \{b\})>v_j(b|A_j\setminus \{a,b\})$, contradicting the submodularity of valuation function $v_j$. Our direct proof of the claim $|G_i|\leq n-1$ is now complete.
\qed

\paragraph{An alternative proof of Theorem~\ref{thm:max-nsw}.}
Our alternative proof of Theorem~\ref{thm:max-nsw} uses a result of~\citet{CKM+19}. Specifically,~\citet[Theorem 3.5]{CKM+19} proved that in allocation instances with monotone submodular goods, the maximum Nash welfare allocation satisfies the fairness criterion called {\em marginal envy-freeness up to some good} (MEF1), which is defined as follows. 

\begin{definition}
    An allocation $A=(A_1,A_2, ..., A_n)$ is marginal envy-free up to some good (MEF1) in an allocation instance with $n$ agents and monotone submodular goods, if for every two agents $i$ and $j$ with $A_j\not=\emptyset$, there exists a good $g\in A_j$ such that $v_i(A_i)\geq v_i(A_j\setminus \{g\}|A_i)$.
\end{definition}

Our alternative proof of Theorem~\ref{thm:max-nsw} follows after showing (via Lemma~\ref{lem:mef1-implies-prop1}) that MEF1 implies PROP1.
\begin{lemma}\label{lem:mef1-implies-prop1}
    In any allocation instance with monotone submodular goods, any MEF1 allocation is also PROP1.
\end{lemma}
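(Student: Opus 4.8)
The plan is to show that the MEF1 condition, applied to each agent $i$ and aggregated over all other agents, yields the PROP1 bound $v_i(M)\leq n\cdot v_i(A_i\cup\{g^*\})$ where $g^*\in\argmax_{g\in M\setminus A_i}v_i(g|A_i)$. The MEF1 definition states that for every pair $i,j$ with $A_j\neq\emptyset$ there is a good $g_j\in A_j$ with $v_i(A_i)\geq v_i(A_j\setminus\{g_j\}\mid A_i)$. The key observation is that MEF1 directly bounds the \emph{marginal} value $v_i(A_j\setminus\{g_j\}\mid A_i)$ of each foreign bundle (minus one good) by $v_i(A_i)$, which is exactly the kind of marginal quantity that decomposes nicely under submodularity.

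First I would write $v_i(M)=v_i(A_i)+v_i(M\setminus A_i\mid A_i)$ and focus on the second term. Using Claim~\ref{claim:submodular-vs-additive} (the subadditivity of marginals), I would split $M\setminus A_i$ over the nonempty foreign bundles and, for each such $A_j$, separate out the designated good $g_j$ to get $v_i(A_j\mid A_i)\leq v_i(A_j\setminus\{g_j\}\mid A_i)+v_i(g_j\mid A_i)$. The first summand is bounded by $v_i(A_i)$ via MEF1, and the second by $v_i(g^*\mid A_i)=v_i(A_i\cup\{g^*\})-v_i(A_i)$ via the choice of $g^*$ (note $g_j\in M\setminus A_i$, so its marginal is at most that of $g^*$). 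Summing over at most $n-1$ foreign agents gives roughly $v_i(M\setminus A_i\mid A_i)\leq (n-1)\cdot v_i(A_i)+(n-1)\cdot v_i(g^*\mid A_i)$, and adding back $v_i(A_i)$ yields $v_i(M)\leq n\cdot v_i(A_i)+(n-1)\cdot v_i(g^*\mid A_i)\leq n\cdot v_i(A_i\cup\{g^*\})$, which is the PROP1 condition. Monotonicity guarantees $v_i(g^*\mid A_i)\geq 0$, so the coefficient $(n-1)$ can be safely relaxed to $n$ in the final step.

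The main obstacle is controlling the split $v_i(A_j\mid A_i)\leq v_i(A_j\setminus\{g_j\}\mid A_i)+v_i(g_j\mid A_i)$ correctly: I must apply the marginal subadditivity from Claim~\ref{claim:submodular-vs-additive} with the conditioning bundle $A_i$ and confirm the three sets $A_j\setminus\{g_j\}$, $\{g_j\}$, and $A_i$ are mutually disjoint, which holds since $A$ is an allocation. A secondary subtlety is ensuring the bound $v_i(g_j\mid A_i)\leq v_i(g^*\mid A_i)$ uses only the definition of $g^*$ as the marginally-best unassigned good, together with $g_j\notin A_i$. Handling empty foreign bundles is immediate, since they contribute nothing to $M\setminus A_i$ and the MEF1 condition is only required for $A_j\neq\emptyset$. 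I expect the argument to be short and clean, mirroring the aggregation structure of the direct proof of Theorem~\ref{thm:max-nsw} but substituting the MEF1 inequality for the Nash-welfare first-order conditions.
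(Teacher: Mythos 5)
Your proposal is correct and takes essentially the same approach as the paper's proof: both invoke the MEF1 inequality once per nonempty foreign bundle, split off the designated good $g_j$ via Claim~\ref{claim:submodular-vs-additive}, bound $v_i(g_j|A_i)$ by the marginal of the best available good $g^*$, and then use monotonicity to absorb the slack into the factor $n$. The only cosmetic difference is in the aggregation: you bound $v_i(M\setminus A_i|A_i)\leq \sum_{j\neq i}v_i(A_j|A_i)$ directly, while the paper writes $v_i(M)$ as the telescoping sum $\sum_{j}v_i\left(A_j|\cup_{t<j}A_t\right)$ and lower-bounds each term — but these are the same argument, since your bundle-level subadditivity step is itself established by exactly that telescoping-plus-submodularity reasoning.
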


\begin{proof}
Consider an allocation instance with $n$ agents having monotone submodular valuations over goods and let $A=(A_1, A_2, ..., A_n)$ be an MEF1 allocation. We will prove that there exists a good $g\not\in A_1$ such that $v_1(M)\leq n\cdot v_1(A_1\cup\{g\})$. This will prove that the allocation $A$ satisfies the PROP1 condition for agent $1$; the proof of the PROP1 condition for any other agent trivially follows by applying the same argument after renaming the agents appropriately.

For agent $j=2, ..., n$ such that $A_j\not=\emptyset$, denote by $g_j$ a good in bundle $A_j$ such that
\begin{align*}
    v_1(A_1) &\geq v_1(A_j\setminus\{g_j\}|A_1).
\end{align*}
By the MEF1 property of allocation $A$, such a good certainly exists. Let $G$ be the union of all items $g_j$ for agent $j\in [n]\setminus\{1\}$ such that $A_j\not=\emptyset$, i.e., $G:=\{g_j\in A_j:j\in [n]\setminus \{1\}, A_j\not=\emptyset\}$ Also, let $g^*\in \argmax_{g\in G}{v_1(g|A_1)}$.

For agent $j\in [n]\setminus \{1\}$ such that $A_j\not=\emptyset$, we have
\begin{align}\nonumber
    v_1(A_1\cup\{g^*\}) &\geq v_1(A_1\cup\{g_j\}) = v_1(A_1)+v_1(g_j|A_1)\\\nonumber
    &\geq v_1(A_j\setminus \{g_j\}|A_1)+v_1(g_j|A_1)\\\label{eq:marginal}
    &\geq v_1(A_j|A_1)\geq v_1\left(A_j|\cup_{t=1}^{j-1}{A_t})\right).
\end{align}
The first inequality follows from the definition of good $g^*$, the second one from the MEF1 condition, the third one by Claim~\ref{claim:submodular-vs-additive}, and the fourth one by the submodularity of function $v_1$.

Notice that inequality (\ref{eq:marginal}) trivially holds if $A_j=\emptyset$ and for $j=1$ (due to monotonicity). Thus, summing inequality (\ref{eq:marginal}) for $j\in [n]$, we get
\begin{align*}
    n\cdot v_1(A_1\cup\{g^*\}) &\geq \sum_{j=1}^n{v_1\left(A_j|\cup_{t=1}^{j-1}{A_t}\right)}=v_1(M),
\end{align*}
which implies the PROP1 condition for agent $1$.
\end{proof}

As a maximum Nash welfare allocation is Pareto-optimal, we obtain the following corollary.

\begin{corollary}
In any allocation instance with monotone submodular goods, a PROP1 and Pareto-optimal allocation exists.
\end{corollary}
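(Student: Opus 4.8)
The plan is to use the maximum Nash welfare allocation itself as the witness. Such an allocation exists because there are only finitely many ways to partition $M$ among the $n$ agents, so the lexicographic objective defining it (first maximize the number of agents with positive value for their bundle, then maximize the product of these positive valuations) attains its optimum. Let $A$ denote this allocation. Theorem~\ref{thm:max-nsw} immediately gives that $A$ is PROP1, so the only remaining task is to establish that $A$ is Pareto-optimal.

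To prove Pareto-optimality I would argue by contradiction. Suppose some allocation $B$ Pareto-dominates $A$, meaning $v_i(B_i)\geq v_i(A_i)$ for every agent $i$, with strict inequality for at least one agent. The first key observation is that, since valuations only increase, every agent who has positive value in $A$ still has positive value in $B$; hence the set of agents with positive value under $B$ contains the corresponding set under $A$.

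The argument then splits into two cases according to the two-tier definition of maximum Nash welfare. If some agent with value zero in $A$ acquires positive value in $B$, then $B$ has strictly more agents with positive value, contradicting the primary (count) criterion defining $A$. Otherwise the two sets of positive-value agents coincide; call this common set $S$. In this case the strictly improving agent $i^*$ must lie in $S$, since a strict increase from the nonnegative value $v_{i^*}(A_{i^*})$ forces $v_{i^*}(B_{i^*})>0$, and by assumption no new positive-value agents appear. As every agent in $S$ weakly improves and $i^*$ strictly improves, the product $\prod_{i\in S} v_i(B_i)$ is strictly larger than $\prod_{i\in S} v_i(A_i)$, contradicting the secondary (product) criterion. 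Either way we reach a contradiction, so no dominating $B$ exists and $A$ is Pareto-optimal. Combining this with the PROP1 guarantee of Theorem~\ref{thm:max-nsw} completes the proof.

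I expect the only delicate point to be the case analysis forced by the extended definition of maximum Nash welfare. The classical one-line argument---that any Pareto improvement strictly raises the product and hence contradicts optimality---is valid only when one can assume all agents have positive value. Here one must instead separate the two tiers of the objective so as to handle agents who have value zero under $A$ without conflating a gain in the count of positive-value agents with a gain in the product.
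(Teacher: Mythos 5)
Your proposal is correct and follows essentially the same route as the paper: take the maximum Nash welfare allocation, invoke Theorem~\ref{thm:max-nsw} for PROP1, and use its Pareto-optimality. The only difference is that you explicitly verify Pareto-optimality under the two-tier (count, then product) definition, a folklore fact the paper asserts without proof; your case analysis of it is sound.
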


Unfortunately, PROP1 can be incompatible with Pareto-optimality if slightly more general (i.e., monotone XOS) valuation functions are allowed.

\begin{theorem}\label{thm:xos-po-and-prop1-incompatible}
There exists an allocation instance with monotone XOS goods, in which no Pareto-optimal allocation is PROP1.
\end{theorem}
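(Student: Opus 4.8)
The plan is to reuse the very instance from the proof of Theorem~\ref{thm:ef1-not-imply-prop1}: three agents sharing the common monotone XOS valuation $v$ of Table~\ref{tab:ef1-not-imply-prop1}, where $v(S)$ depends only on $|S|$, with $v(1)=v(2)=2$, $v(3)=3$, and $v(k)=k$ for $k\geq 3$. Rather than reasoning directly about Pareto-optimal allocations, I would argue via the contrapositive formulation. Since every allocation is either Pareto-optimal or Pareto-dominated, the statement ``no Pareto-optimal allocation is PROP1'' is logically equivalent to ``every PROP1 allocation is Pareto-dominated.'' So it suffices to take an arbitrary PROP1 allocation and exhibit another allocation that Pareto-dominates it.

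First I would characterize the PROP1 allocations. Because $v$ is identical across agents and depends only on cardinality, the proportionality threshold is $v(M)/3=7/3$ for every agent, and whether an agent's bundle is PROP1 depends only on its size $s$: the bundle is PROP1 exactly when $v(s)\geq 7/3$ or $v(s+1)\geq 7/3$. Inspecting the table, a size-$0$ or size-$1$ bundle fails (as $v(1)=v(2)=2<7/3$), whereas any bundle of size at least $2$ succeeds (since $v(3)=3\geq 7/3$ and $v(k)\geq 3$ for $k\geq 3$). Hence an allocation is PROP1 if and only if every agent receives at least two items; with seven items shared among three agents, this forces the size multiset to be exactly $\{2,2,3\}$.

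Next I would carry out the Pareto-improvement. Let $A$ be any PROP1 allocation; by the characterization it has one agent holding three items (value $v(3)=3$) and two agents holding two items each (value $v(2)=2$). Form $B$ by moving one item from each of the two ``small'' agents into the bundle of the agent already holding three items. The recipient then holds five items, with value $v(5)=5>3$, while each donor drops to a single item with value $v(1)=2$, which is unchanged precisely because $v(1)=v(2)$. Thus $B$ yields the value profile $(5,2,2)$ against $A$'s profile $(3,2,2)$: every agent is at least as well off and the recipient is strictly better off, so $B$ Pareto-dominates $A$, and $A$ is not Pareto-optimal.

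Concluding, since every PROP1 allocation is Pareto-dominated, no Pareto-optimal allocation can be PROP1, which proves the theorem. The place where the construction does its real work—and the point requiring the most care—is the exact equality $v(1)=v(2)$, which is what lets the two donors surrender an item ``for free'' while the recipient's value jumps; this is exactly a failure of submodularity (a non-increasing-marginals valuation could not keep the donors flat while the recipient gains), consistent with Theorem~\ref{thm:max-nsw} ruling out such an instance in the submodular regime. The main obstacle is therefore ensuring the size characterization is \emph{complete}, i.e., that no PROP1 allocation escapes the profile $\{2,2,3\}$, so that the single domination move applies uniformly to all of them.
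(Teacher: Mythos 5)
Your proposal is correct and follows essentially the same route as the paper's own proof: the same seven-item, three-agent instance from Theorem~\ref{thm:ef1-not-imply-prop1}, the same characterization that PROP1 forces the bundle-size profile $\{2,2,3\}$, and the same Pareto improvement to sizes $(5,1,1)$ exploiting $v(1)=v(2)$. The only difference is presentational—you make the contrapositive framing and the completeness of the size characterization explicit, which the paper leaves implicit.
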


\begin{proof}
We use the allocation instance with three agents having the same XOS valuation function for seven items that we used in the proof of Theorem~\ref{thm:ef1-not-imply-prop1}. The only PROP1 allocations should give three items to some agent and two items to each of the other two. Indeed, if an agent got at most one item, the valuation for any superset including an extra item would be $2$ while the proportionality threshold is $7/3$. Then, the (non-PROP1) allocation that gives five items to the agent who initially got the three items and one item to each of the other two agents is a Pareto improvement; the first agent has strictly better value and the other two are not worse off.
\end{proof}

\section{Conclusion and open problems}\label{sec:open}
We have presented new results regarding the existence and efficient computation of PROP1 allocations for instances with non-additive (i.e., submodular and subadditive) and possibly non-monotone valuation functions over goods. Our work leaves several open problems, including the following. First, is PROP1 and Pareto-optimality compatible on allocation instances with satiating submodular goods? Second, is there a polynomial-time algorithm for computing a PROP1 and Pareto-optimal allocation for monotone submodular goods? Third, what is the price of fairness with respect to Nash welfare of PROP1 allocations in subadditive instances? The proof of Theorem~\ref{thm:xos-po-and-prop1-incompatible} can be modified to yield a price of fairness lower bound higher than $1$. Finally, for classes of allocation instances in which PROP1 is not compatible with Pareto-optimality, what is the best approximation of PROP1 that is compatible with Pareto-optimality? Can such approximately PROP1 and Pareto-optimal allocations be computed efficiently?

\section*{Acknowledgements}
This work has been partially supported by Independent Research Fund Denmark (DFF) under grant 2032-00185B. 

\bibliography{sample}

\end{document}